\newtheorem{theorem}{Theorem}
\definecolor {processblue}{cmyk}{0.96,0,0,0}
\theoremstyle{plain}
\title{\LARGE\bfseries An Introduction to Causal Modelling}
\author[1]{Gauranga Kumar Baishya%
  \thanks{\href{mailto:gauranga.mds2023@cmi.ac.in}{gauranga.mds2023@cmi.ac.in}}}
\author[1]{M.~R.~Srinivasan%
  \thanks{\href{mailto:mrsvasan@cmi.ac.in}{mrsvasan@cmi.ac.in}}}
\affil[1]{Chennai Mathematical Institute (CMI)}
\date{\today}
\begin{document}
\begin{titlepage}
  \centering
  \maketitle
  \thispagestyle{empty}
\end{titlepage}

\chapter*{\centering Acknowledgment}
This work is based on a ``Guided Project" where I read a couple of existing and latest literature in Causal Inference. I am profoundly grateful to Chennai Mathematical Institute and my advisor, Professor M.R.Srinivasan, for his sage advice, rigorous academic guidance, and the confidence he instilled in me. His expertise in the field of applied statistics and theory greatly enriched my work.

\begin{center}
    \vspace*{\fill}
    \tableofcontents
    \vspace*{\fill}
\end{center}
\newpage
\chapter{Introduction to Causal Inference}
\label{sec:intro}

\section*{Why Causality Matters}

Nearly every scientific or policy question that moves beyond description---from ``Will a carbon tax curb emissions?'' to ``Does aspirin prevent a second
heart attack?''--- is inherently causal.  Descriptive statistics can tell us what \emph{is}, but acting wisely requires knowing what \emph{would happen} under  different actions. Causal inference thus seeks to answer counterfactual questions:  
\begin{quote}
\centering
\itshape
What would the outcome have been, had a different choice been made?
\end{quote}

Historically, the formal study of causality lay scattered across philosophy, epidemiology, statistics, and economics.  Modern frameworks---notably the
potential‑outcomes model and graphical DAG approaches---have unified these threads, giving us a precise language, clear assumptions, and powerful tools for estimation and criticism. The remainder of this single high‑level chapter develops the essential building blocks of causal thinking, weaving together intuitive stories, mathematical notation, and concrete examples.

\section{Potential Outcomes and Individual Causal Effects}
\label{subsec:po}

\paragraph{Historical roots.}
The potential‑outcomes idea dates to Neyman(1923) in agricultural field trials and was popularized by Rubin in the 1970s.  Its genius is to treat each unit as carrying a \emph{vector} of latent responses---one for every treatment value---thereby separating the science (how Nature assigns those latent
responses) from the statistics (how the analyst observes and aggregates them).


\paragraph{Notation and basic definitions.}
Let \(A \in \{0,1\}\) be a \emph{binary treatment} (with \(A=1\) for the active treatment and \(A=0\) for control) and let  
\(Y \in \{0,1\}\) be a \emph{binary outcome} measured at the end of the study.  
For every individual \(i\) we postulate two \emph{potential outcomes}
\[
  Y_i^{1}, \quad Y_i^{0} 
  \qquad 
  (\text{equivalently written } Y_i(a)\text{ for } a \in \{0,1\}),
\]
and collect them in the vector of potential outcomes
\[
  \mathbf{Y}_i \;=\; \bigl(Y_i^{0},\,Y_i^{1}\bigr).
\]

\begin{enumerate}
  \item These potential outcomes are \textbf{well‑defined} only under the  
        \emph{Stable Unit Treatment Value Assumption (SUTVA)}:
        \begin{enumerate}
            \item \emph{No multiple versions of treatment}: each value of \(A\) corresponds to a unique, well‑specified intervention.
            \item \emph{No interference}: an individual’s outcome depends only on their own treatment assignment, not on the assignments of others.
        \end{enumerate}
  \item The observed outcome is a \emph{reveal} of exactly one coordinate:
        \[
          Y_i \;=\; Y_i^{A_i}.
        \]
        The unobserved coordinate is called the \emph{counterfactual} outcome.
\end{enumerate}

\medskip
\noindent
The \textbf{individual causal effect} is defined as the contrast
\[
  \Delta_i 
  \;=\; 
  Y_i^{1} \;-\; Y_i^{0} 
  \;\in\; \{-1,\,0,\,1\},
\]
indicating whether, for that person, treatment
\begin{enumerate}
  \item harms (\(\Delta_i = -1\)),
  \item has no effect (\(\Delta_i = 0\)), or
  \item benefits (\(\Delta_i = 1\)).
\end{enumerate}

\smallskip
\noindent
Because, for each individual, we can ever observe \emph{at most one} of the two potential outcomes, the pair \(\bigl(Y_i^{0},Y_i^{1}\bigr)\) is never jointly observed, making \(\Delta_i\) \emph{intrinsically unidentifiable}.  
This is not merely a statistical limitation but a logical one, often referred to as the \emph{fundamental problem of causal inference}.

\paragraph{Example.}
Zeus undergoes a heart transplant (\(A = 1\)) and dies on the fifth
post‑operative day (\(Y = 1\)).  Consequently we observe the treated
potential outcome
\[
  Y_{\text{Zeus}}^{1} \;=\; 1,
\]
while the untreated potential outcome \(Y_{\text{Zeus}}^{0}\) is \emph{fundamentally unobservable}.  Any statement about what would have happened to Zeus \emph{without} transplantation must rely on data from \emph{other} patients who did not receive a transplant, together with
assumptions (e.g.\ conditional ignorability) that make Zeus ``exchangeable'' with those patients.  In this way, information is \emph{borrowed} from comparable untreated individuals to draw causal conclusions about Zeus’s counterfactual outcome.

\section{Average Causal Effects and Effect‐Measure Scales}
\label{subsec:ace}

\paragraph{From individuals to populations.}
Recall that the \emph{individual causal effect} 
\(\Delta_i = Y_i^1 - Y_i^0\) 
is never fully observed for any one person.  Nevertheless, scientific inference and policy decisions typically target \emph{average} effects over a population or subpopulation.  The most fundamental estimand is the 
\emph{average treatment effect} (ATE), defined by
\[
\mathrm{ATE}
\;=\;
\mathbb{E}\bigl[Y^1 - Y^0\bigr]
\;=\;
\mathbb{E}[Y^1] \;-\; \mathbb{E}[Y^0].
\]
In many applications one also considers two refined, subpopulation‐specific estimands:
\begin{description}
  \item[\(\mathrm{ATT}\)]%
    \emph{Average Treatment Effect on the Treated:}
    \[
      \mathrm{ATT}
      \;=\;
      \mathbb{E}\bigl[Y^1 - Y^0 \mid A = 1\bigr],
    \]
    which answers, “What is the average effect among those who actually received treatment?”
    
  \item[\(\mathrm{ATC}\)]%
    \emph{Average Treatment Effect on the Untreated (Controls):}
    \[
      \mathrm{ATC}
      \;=\;
      \mathbb{E}\bigl[Y^1 - Y^0 \mid A = 0\bigr],
    \]
    useful for planning expansions of a program to currently untreated individuals.
\end{description}

\paragraph{Risk‐based summary measures.}
When the outcome \(Y\) is binary, additional \emph{scale choices} help convey effect direction and public‐health impact.  Three common measures are:

\begin{description}
  \item[Risk Difference (RD):]%
    \[
      \mathrm{RD}
      \;=\;
      P(Y^1 = 1) \;-\; P(Y^0 = 1).
    \]
    This \emph{absolute} scale aligns with notions such as the “number needed to treat” 
    \(\bigl(\mathrm{NNT} = 1/\mathrm{RD}\bigr)\).

  \item[Risk Ratio (RR):]%
    \[
      \mathrm{RR}
      \;=\;
      \frac{P(Y^1 = 1)}{P(Y^0 = 1)}.
    \]
    A value below 1 (e.g.\ \(\mathrm{RR}=0.6\)) can be read as “40\% relative reduction.”

  \item[Odds Ratio (OR):]%
    \[
      \mathrm{OR}
      \;=\;
      \frac{P(Y^1 = 1)/P(Y^1 = 0)}{P(Y^0 = 1)/P(Y^0 = 0)}.
    \]
    Widely used in case–control studies and logistic regression.  Note that for \emph{common} outcomes (prevalence \(>10\%\)), the OR can substantially overstate the RR.
\end{description}

\paragraph{Choosing a scale.}
The statistical efficiency, interpretability, and relevance of an effect‐measure scale depend on the scientific context (additive vs.\ multiplicative mechanism), the target audience, and the study design.  Always match the scale to both the underlying causal model and the stakeholders’ needs. 

\section{Association versus Causation}
\label{subsec:assoc}

\paragraph{Conditional versus Marginal Risk.}

Consider two distinct notions of risk:

\[
\underbrace{P(Y = 1 \mid A = a)}_{\text{observed conditional risk}}\quad \text{versus}\quad 
\underbrace{P(Y^{a} = 1)}_{\text{counterfactual (marginal) risk}}.
\]

The first quantity, \( P(Y = 1 \mid A = a) \), is an \emph{observed conditional risk}. It represents the probability of the outcome given that treatment status \(A\) is observed to be \(a\). Thus, this measure is derived solely from individuals who \emph{actually received} treatment \(a\).

In contrast, the second quantity, \( P(Y^{a} = 1) \), is a \emph{counterfactual (marginal) risk}, representing the probability that the outcome would be observed if the entire population were universally assigned to treatment \(a\). This counterfactual risk cannot typically be observed directly, since it involves imagining the outcome under a hypothetical intervention, regardless of individuals' actual treatment assignments.

These two risks are conceptually distinct. The equality:

\[
P(Y = 1 \mid A = a) \;=\; P(Y^{a} = 1)
\]

holds only under strong conditions, notably \emph{unconfoundedness}, also known as \emph{conditional exchangeability}, where treatment assignment is independent of potential outcomes. Such conditions are naturally satisfied in randomized trials but are rarely guaranteed in observational studies, making direct equivalence of these risks uncommon in practice. Careful methods of adjustment, conditioning, or weighting are typically required to make valid causal inferences from observational data.

\paragraph{Confounding and Simpson's Paradox.}
A variable \( L \) is said to \emph{confound} the relationship between treatment \( A \) and outcome \( Y \) if it simultaneously influences both \( A \) and \( Y \) and is not an intermediate variable on the causal pathway from \( A \) to \( Y \). Confounding thus generates a non-causal association, distorting our ability to discern the true causal effect of treatment.

\href{https://en.wikipedia.org/wiki/Simpson%27s_paradox}{Simpson's paradox} vividly illustrates the perils of unadjusted analyses: it occurs when the direction of the association between treatment and outcome observed in the overall population reverses once we stratify (or condition) on the confounder \( L \). This paradox underscores the crucial need for proper adjustment or stratification in observational studies. To systematically identify potential confounding and clarify causal structures, researchers frequently rely on Directed Acyclic Graphs (DAGs). These graphical tools offer intuitive and rigorous methods for pinpointing pathways that can introduce bias, ensuring that analyses adjust for the right variables and preserve the integrity of causal inference.

\paragraph{Aspirin:}
Let’s take the example of aspirin treatment (\( A \)) and the chance of dying (\( Y \)) among heart patients. Patients who are already at high risk are more likely to be given aspirin. Because these patients are already more likely to die, we might see a higher death rate among those who got aspirin: \( P(Y = 1 \mid A = 1) > P(Y = 1 \mid A = 0) \). At first glance, this could wrongly make it seem like aspirin increases the risk of death. But that’s not true—it’s just that sicker patients were more likely to get the drug. If we adjust for how sick the patients were at the start (using something like a "risk score"), or if we randomly assign who gets aspirin and who doesn’t, we find the true story: aspirin actually helps. After accounting for this confounding, we see that aspirin lowers the risk of death—it has a protective, causal effect.

\bigskip
Thus, Raw associations in data are often misleading. Without appropriate analytical methods or experimental designs that explicitly control for confounding, we risk drawing incorrect conclusions. Robust causal inference requires thoughtfully severing all spurious pathways and carefully distinguishing genuine causal effects from mere correlations.

\section{RCT, Exchangeability, and Consistency}
\label{subsec:rct}

\paragraph{Design Logic.}
Randomized controlled trials (RCTs) are the gold standard for causal inference because randomization severs all backdoor paths from baseline prognostic variables to treatment. This guarantees marginal exchangeability:
\[
Y^{a} \perp\!\!\!\perp A,
\]
meaning that the potential outcomes are independent of treatment assignment. This condition, combined with two other key assumptions—\textbf{consistency}, 
\[
Y = Y^A,
\]
which links the observed outcome to the potential outcome corresponding to the treatment actually received, and \textbf{positivity},
\[
0 < P(A = a \mid L) < 1 \quad \text{for all strata } L,
\]
forms the standard identification triad. These assumptions allow us to estimate causal effects from observed data in an RCT setting.

\paragraph{Intention-to-Treat (ITT) vs Per-Protocol:}
In practice, some participants may not follow their assigned treatment (non-compliance). The \textbf{Intention-to-Treat (ITT)} principle analyzes participants according to their original treatment assignment, regardless of whether they adhered. This approach preserves the exchangeability provided by randomization, avoiding bias due to post-randomization behavior. In contrast, a \textbf{Per-Protocol} analysis compares participants based on the treatment actually received. While this may reflect the \emph{causal efficacy} of the treatment under ideal use, it requires strong assumptions to adjust for any post-randomization confounders of adherence. Without proper adjustment, the causal interpretation of per-protocol estimates may be invalid.

\paragraph{Blinding and Concealment:} \textbf{Blinding} protects the study from differential behavior or measurement that could arise if participants or researchers know the treatment assignment. For example, caregivers might unconsciously deliver co-interventions differently, or outcome assessors might introduce bias.
\textbf{Allocation concealment} ensures that treatment assignment is not known at the point of enrollment, preventing selection bias. 

\bigskip

Thus, randomization is not just about flipping a fair coin—it involves a bundle of practices designed to maintain the integrity of causal claims.

\section{Conditional Randomization, Stratification, and Blocking}
\label{subsec:block}

\paragraph{} In many randomized controlled trials, treatment is not assigned completely at random across the entire study population. Instead, subjects are often grouped into strata (or "blocks") based on observed covariates such as age, sex, or disease severity. Within each block, treatment is then randomized. This is known as a \textbf{blocked design} or \textbf{stratified randomization}.Blocking improves the precision of causal estimates by ensuring that treatment groups are balanced with respect to key baseline characteristics. In such designs, treatment assignment is conditionally independent of the potential outcomes given the strata, i.e.,
\[
Y^{a} \perp\!\!\!\perp A \mid L,
\]
where \(L\) is the blocking (or stratifying) variable. However, marginal (unconditional) independence \(Y^{a} \perp A\) may no longer hold. Because treatment assignment is randomized \emph{within strata} but not across the entire population, the analysis must account for the blocking. Ignoring this design structure can result in biased or inefficient estimates. Standard approaches include stratified analyses such as the Mantel–Haenszel estimator or regression models with interaction terms for each stratum.

\paragraph{Heart-Transplant Example in Depth.}
Consider a heart transplant trial where \(A = 1\) indicates receiving a transplant and \(Y = 1\) indicates death. Let \(L = 1\) represent critically ill patients and \(L = 0\) represent stable patients. In this trial, suppose:
\[
P(A = 1 \mid L = 1) = 0.75 \quad \text{and} \quad P(A = 1 \mid L = 0) = 0.50.
\]
This indicates that sicker patients are more likely to be assigned to the treatment group, making the overall (crude) comparison misleading. For instance, the crude mortality rates are:
\[
\text{Treated: } 7/13 \quad \text{vs} \quad \text{Untreated: } 3/7,
\]
suggesting a higher risk of death in the treated group. However, when we stratify by \(L\), the mortality rates within each group are:
\[
\text{For } L = 1: \quad 2/3 \text{ in both treated and control,}
\]
\[
\text{For } L = 0: \quad 1/4 \text{ in both treated and control.}
\]

Within each stratum, the treatment effect disappears—mortality risk is identical. After standardization (i.e., averaging the stratum-specific risks according to the distribution of \(L\)), we find that the marginal (overall) risk ratio is actually 1. This illustrates how conditioning on \(L\) can reveal the true causal story hidden beneath the crude associations.

\paragraph{Standard Error Implications.}
Ignoring the blocking structure in analysis can lead to:
\begin{enumerate}
    \item Inflated standard error estimates, as the precision gains from stratification are lost;
    \item Missed detection of effect modification, where the treatment effect might vary across strata;
    \item Potential confounding if the stratification variable is related to both treatment and outcome.

To avoid these pitfalls, it is critical that statistical models reflect the study design. At a minimum, the analysis should include indicator variables (dummy variables) for each block, and, if warranted, interaction terms between treatment and blocks to test for effect heterogeneity. Proper model specification ensures valid inference and optimizes the use of stratified designs.
\end{enumerate}

\section{Crossover Experiments and Period Effects}
\label{subsec:xover}

\paragraph{Statistical model.}
Let $t\in\{0,1\}$ index periods, $A_{it}$ the treatment at period $t$, and
$Y_{it}$ the response.  Under assumptions of
\emph{no carryover},
\emph{equal period effects}, and
\emph{time‑stable causal effect} $\alpha_i$,
the within‑subject difference
\(
D_i=Y_{i1}-Y_{i0}
\)
identifies $\alpha_i$ free of between‑person confounders.

\paragraph{Practical wrinkles.}
Wash‑out periods mitigate biological carryover; period effects (fatigue,
learning) call for including a fixed effect for time.  Missingness, often
non‑random, erodes the self‑controlled virtue and demands inverse‑probability
weighting in turn.

\paragraph{When infeasible.}
Irreversible treatments (organ transplant) or terminal outcomes (death)
preclude crossover schemes; alternative self‑controlled designs (case‑crossover,
N‑of‑1) may still apply when outcomes are transient.

\section{Inverse Probability Weighting (IPW) in Practice}
\label{subsec:ipw2}

\paragraph{Estimand via Weighting.}
Inverse Probability Weighting (IPW) is a widely used method for estimating average treatment effects in observational studies. The key idea is to create a pseudo-population in which the distribution of observed covariates is independent of treatment assignment—mimicking a randomized trial.

Let \(\pi(L) = P(A = 1 \mid L)\) denote the \textbf{propensity score}, i.e., the probability of receiving treatment given baseline covariates \(L\). The propensity score captures the treatment selection mechanism. We then define the \textbf{stabilized IPW weights} for individual \(i\) as:
\[
w_i = \frac{A_i \cdot P(A = 1)}{\pi(L_i)} + \frac{(1 - A_i) \cdot P(A = 0)}{1 - \pi(L_i)}.
\]
These weights aim to balance the covariate distributions across treatment groups. Stabilized weights are preferred over naive weights like \(1/\pi(L_i)\), as they help reduce variance inflation and yield more stable estimators in finite samples.

\paragraph{Positivity and Truncation.}
The success of IPW depends crucially on the \textbf{positivity assumption}, which requires that:
\[
0 < \pi(L_i) < 1 \quad \text{for all } i.
\]
That is, every individual must have a non-zero probability of receiving either treatment or control, regardless of their covariates \(L\). In practice, some values of \(\pi(L_i)\) may be very close to 0 or 1, leading to extremely large weights and inflated standard errors. To address this, analysts often perform \textbf{weight truncation} or \textbf{weight stabilization}, for example, by capping weights at the 1st and 99th percentiles. This introduces a small amount of bias but substantially reduces the variance and leads to more reliable inference. A more serious problem arises with \textbf{structural positivity violations}, where certain strata of \(L\) have no treated or no control observations. In such cases, it is fundamentally impossible to estimate treatment effects in those regions of the covariate space. The solution is to \emph{redefine} the causal question and restrict attention to the population where treatment comparisons are possible—``where the data support the effect.'' We will see more in Chapter 4.

\paragraph{Sandwich Variance and Bootstrap.}
Since the propensity score \(\pi(L)\) is typically estimated from data (e.g., via logistic regression or machine learning), the resulting weights \(w_i\) are random variables and introduce additional uncertainty.
To account for this, standard variance formulas must be adjusted. Two common methods for robust inference are:

\begin{enumerate}
    \item \textbf{Sandwich variance estimators} (also known as robust or heteroskedasticity-consistent estimators), which correct the standard errors of IPW estimators by accounting for the estimated nature of the weights.
    
    \item \textbf{Non-parametric bootstrap}, which resamples the data to empirically estimate the distribution of the IPW estimator. This approach is flexible and widely used, especially when machine learning methods are used to estimate \(\hat\pi(L)\).

\end{enumerate}

\section{References}
\begin{enumerate}
 \item James M.\ Robins and Miguel A.\ Hernán (2020).  
        \emph{Chapter 1, 2, 3, What If? The Foundations of Causal Inference}. Chapman and Hall/CRC.  
        Also freely available at: \url{https://miguelhernan.org/whatifbook}
\end{enumerate}
\newpage

\newpage
\chapter{Randomized Experiments}

\label{subsec:smoking}

Let \( D \in \{0,1\} \) denote a binary treatment variable indicating whether an individual smokes marijuana. Specifically, \( D = 1 \) denotes a smoker and \( D = 0 \) denotes a non-smoker. Following the potential outcomes framework, we associate with each individual two hypothetical (counterfactual) outcomes:

\[
Y(1) \quad \text{and} \quad Y(0),
\]where \( Y(1) \) is the individual's lifespan if they smoke, and \( Y(0) \) is the lifespan if they do not. However, in practice, we only observe one of these outcomes per individual, depending on their actual smoking status. To formalize this, we define a structural model for potential outcomes:

\begin{equation}
Y(d) = \eta_0 + \eta_1 d
\label{eq:struct}
\end{equation} Here, \( \eta_0 \) is a person-specific intercept capturing their baseline health endowment (e.g., genetics, early childhood environment), and \( \eta_1 \) reflects the biological effect of marijuana smoking on longevity. For simplicity and didactic purposes, we assume \( \eta_1 = 0 \), implying that marijuana smoking has no causal effect on lifespan. In other words, the only differences in observed lifespans will arise due to differences in \( \eta_0 \), not due to the treatment itself.

\bigskip

Now suppose that smoking behavior is not randomly assigned but arises from an individual's latent propensity or predisposition to smoke. This is modeled as:

\begin{equation}
D = \mathbf{1}\{ \nu > 0 \},
\label{eq:selection}
\end{equation} where \( \nu \) represents unobserved psychological factors such as risk tolerance, peer influence, or personality traits. Crucially, we assume that this unobserved propensity is \emph{negatively} correlated with baseline health:

\begin{equation}
E[\eta_0 \nu] < 0.
\label{eq:negcor}
\end{equation} This means individuals who are more likely to smoke tend to start off with worse health on average (due to for example, drinking). This assumption introduces selection bias, as the treatment assignment (\( D \)) is no longer independent of potential outcomes. Under our assumption that \( \eta_1 = 0 \), the potential outcomes simplify to:

\[
Y(1) = Y(0) = \eta_0,
\] and hence the observed outcome is simply:

\[
Y = Y(D) = \eta_0.
\] However, in observational data, we do not observe \( \eta_0 \) or \( \nu \) directly. The only observed data are the realized pairs \( (Y, D) \), where \( D \) is the observed treatment (smoking status), and \( Y \) is the observed lifespan.

\paragraph{Apparent Health Gap.}
Even though smoking has no causal effect, selection into smoking based on unobserved health characteristics distorts the observed comparison of average lifespans between smokers and non-smokers. We can compute the expected outcome (lifespan) for each group:

\begin{align}
E[Y \mid D = 1] 
  &= E[\eta_0 \mid \nu > 0] 
  \;<\; E[\eta_0], \label{eq:EY1} \\
E[Y \mid D = 0] 
  &= E[\eta_0 \mid \nu \leq 0] 
  \;>\; E[\eta_0]. \label{eq:EY0}
\end{align} This leads to the observed inequality:

\begin{equation}
E[Y \mid D = 1] < E[Y \mid D = 0], \tag*{(*)}
\end{equation} suggesting that smokers have shorter lifespans than non-smokers. But this is an illusion. Since the treatment \( D \) was not randomized and is negatively correlated with baseline health (\cref{eq:negcor}), the difference is entirely due to \emph{selection bias}, not any biological harm from marijuana use.

\bigskip

This toy example illustrates the fundamental problem of causal inference from observational data. The \emph{observed} difference in means:

\[
\pi = E[Y \mid D = 1] - E[Y \mid D = 0],
\] which we call the \textbf{Average Predictive Effect} (APE), does not equal the \textbf{Average Treatment Effect} (ATE):

\[
\delta = E[Y(1)] - E[Y(0)],
\] whenever the treatment \( D \) is \emph{endogenously} selected. That is, when treatment choice is influenced by variables that also affect the outcome, simple comparisons of outcomes between groups will confound correlation with causation. Without randomization or adequate adjustment for confounding, we cannot identify causal effects reliably.

\section{Selection Bias: Conceptual Foundations}

Formally, bias arises whenever
\[
E[Y(d)\mid D=1] \;\neq\;E[Y(d)],
\qquad d\in\{0,1\}.
\]
The inequality reflects the non‑independence of \(D\) and \(Y(d)\).  Let
\(\Delta_{\text{bias}}\) denote
\[
\Delta_{\text{bias}}=E[Y\mid D=1]-E[Y(1)],
\]
so that APE \(=\delta+\Delta_{\text{bias}}\).  Strategies to neutralise
\(\Delta_{\text{bias}}\) include:

\begin{enumerate}
  \item \textbf{Design:} randomized controlled trials (RCTs), encouragement designs, cluster randomization.
  \item \textbf{Re‑weighting:} inverse‑probability weights, propensity‑score matching, doubly robust estimators.
  \item \textbf{Regression adjustment:}—conditional exchangeability within covariate strata.
\end{enumerate} Among these, RCTs are the cleanest because they modify the data‑generating
process itself rather than adjusting after the fact.

\section{RCT: Assumptions and Theorem}
\label{subsec:rct-assumptions}

\paragraph{Fundamental Assumption.}
In a well-executed randomized controlled trial (RCT), the treatment assignment \( D \) is independent of all potential outcomes. That is, the potential outcomes that would occur under each treatment level do not influence the probability of receiving the treatment. Formally, this is expressed as:

\[
\boxed{D \;\perp\!\!\!\perp\; Y(1),\, Y(0)}.
\] Additionally, the \textbf{positivity assumption} ensures that both treatment arms are represented in the data:

\[
0 < P(D = 1) < 1.
\]

\paragraph{Randomization Theorem.}
Under the independence assumption, the observed mean outcomes in each treatment group can be interpreted as the mean potential outcomes. Specifically, for \( d \in \{0,1\} \),

\[
E[Y \mid D = d] 
= E[Y(d) \mid D = d] 
= E[Y(d)],
\] which implies that:

\[
\pi = E[Y \mid D = 1] - E[Y \mid D = 0] = \delta,
\] where \( \pi \) is the observed average difference between treated and control groups, and \( \delta \) is the \textbf{average treatment effect} (ATE). Thus, randomization ensures that selection bias is eliminated by design.

\paragraph{Note.}
The theoretical independence guaranteed by randomization holds only when several practical conditions are met:

\begin{enumerate}
    \item \textbf{Allocation concealment:} Treatment assignment must be hidden from participants and investigators to prevent manipulation or selection bias.
    \item \textbf{No differential attrition:} Dropout rates should not differ systematically between treatment arms, as this can reintroduce bias.
    \item \textbf{Intention-to-treat (ITT) analysis:} Participants should be analyzed based on their original treatment assignment, regardless of compliance. This preserves the benefits of randomization.
\end{enumerate}

\section{Sampling Framework for an RCT}

We observe
\[
\bigl\{(Y_i,D_i)\bigr\}_{i=1}^{n},
\]
i.i.d.\ from the super‑population distribution of \((Y,D)\).  Throughout we
treat \(n_1=\sum D_i\) and \(n_0=n-n_1\) as random binomial counts.

\section{Point Estimation of Group Means}
\label{subsec:group-means}

In the context of treatment effect estimation, one of the most fundamental tasks is to estimate the expected outcome for each treatment group. For a binary treatment variable \( D \in \{0, 1\} \), we define the group-specific population means as:

\[
\theta_d = \mathbb{E}[Y \mid D = d], \qquad \text{for } d = 0, 1.
\] These quantities represent the average outcomes we would observe within each treatment arm if the entire population were observed under treatment status \( D = d \).

\paragraph{Sample Estimators.}
Given data from a sample of \( n \) individuals indexed by \( i = 1, \ldots, n \), a natural and unbiased estimator of \( \theta_d \) is the empirical mean outcome within the subgroup that received treatment status \( d \). This is given by:

\[
\hat{\theta}_d
  = \frac{\sum_{i=1}^{n} Y_i \cdot \mathbf{1}\{D_i = d\}}
         {\sum_{i=1}^{n} \mathbf{1}\{D_i = d\}}
  = \frac{\bar{Y}_d}{\hat{p}_d},
\]

where:
\begin{enumerate}
  \item \( \mathbf{1}\{D_i = d\} \) is the indicator function that equals 1 when individual \( i \) received treatment \( d \), and 0 otherwise.
  \item \( \hat{p}_d = \frac{n_d}{n} \) is the empirical proportion of individuals in the sample assigned to treatment arm \( d \), with \( n_d = \sum_{i=1}^n \mathbf{1}\{D_i = d\} \).
  \item \( \bar{Y}_d = \frac{1}{n_d} \sum_{i: D_i = d} Y_i \) is the sample mean outcome among individuals who received treatment \( d \).
\end{enumerate} These estimators \( \hat{\theta}_0 \) and \( \hat{\theta}_1 \) are simple group means and serve as foundational building blocks for estimating causal effects, such as the average treatment effect (ATE), which can be computed as \( \hat{\theta}_1 - \hat{\theta}_0 \). They are unbiased estimators of the conditional expectations \( \mathbb{E}[Y \mid D = d] \), provided that sampling is random and no informative missingness or selection bias is present.

\section{Inference for the Average Treatment Effect}
\label{subsec:ate-inference}

The \textbf{average treatment effect (ATE)} is defined as the difference in group means:
\[
\delta = \theta_1 - \theta_0,
\] where \( \theta_d = \mathbb{E}[Y \mid D = d] \) for \( d = 0,1 \). The natural sample estimator is:
\[
\hat{\delta} = \hat{\theta}_1 - \hat{\theta}_0,
\] where \( \hat{\theta}_d \) is the sample mean of \( Y \) in treatment group \( d \), as discussed previously. Assuming finite variances within each treatment arm, the multivariate Central Limit Theorem (CLT) applies:
\[
\sqrt{n}
\begin{bmatrix}
\hat{\theta}_0 - \theta_0 \\
\hat{\theta}_1 - \theta_1
\end{bmatrix}
\xrightarrow{d}
\mathcal{N}\!\left(
  \begin{bmatrix}
  0 \\ 0
  \end{bmatrix},
  \begin{bmatrix}
    \sigma_0^2 / p_0 & 0 \\
    0 & \sigma_1^2 / p_1
  \end{bmatrix}
\right),
\]
where:
\begin{enumerate}
  \item \( \sigma_d^2 = \operatorname{Var}(Y \mid D = d) \) is the conditional variance,
  \item \( p_d = \mathbb{P}(D = d) \) is the marginal probability of treatment arm \( d \).
\end{enumerate} Applying the delta method (specifically for linear combinations), we derive the asymptotic distribution of \( \hat{\delta} \):
\[
\sqrt{n}(\hat{\delta} - \delta)
  \xrightarrow{d}
  \mathcal{N}\left( 0, \; \frac{\sigma_0^2}{p_0} + \frac{\sigma_1^2}{p_1} \right).
\]

\paragraph{Wald Confidence Interval.}
To construct confidence intervals for \( \delta \), we estimate the variances \( \sigma_d^2 \) using their sample analogues:
\[
\hat{\sigma}_d^2 = \frac{1}{n_d - 1} \sum_{i: D_i = d} (Y_i - \bar{Y}_d)^2.
\] A large-sample (asymptotic) \( 100(1 - \alpha)\% \) Wald confidence interval for \( \delta \) is given by:
\[
\hat{\delta} \pm z_{\alpha/2}
\sqrt{
  \frac{\hat{\sigma}_0^2}{n_0}
  + \frac{\hat{\sigma}_1^2}{n_1}
},
\] where \( z_{\alpha/2} \) is the standard normal quantile corresponding to confidence level \( 1 - \alpha \), and \( n_d \) is the number of observations in group \( d \).

\bigskip

This framework provides valid inference under random sampling and large-sample approximations. In small samples or complex designs, bootstrap or robust methods may be preferred.

\section{The Delta Method and Relative Effectiveness}
\label{subsec:delta-method}

In many applied contexts, especially in health economics, epidemiology, and policy evaluation, we are interested not just in the absolute effect of a treatment, but in its \emph{relative} effect. This is often used to express efficiency or cost-effectiveness. Define the relative treatment effect as the proportionate change in the mean outcome:
\[
\varphi = \frac{\theta_1 - \theta_0}{\theta_0},
\] where \( \theta_1 = \mathbb{E}[Y \mid D=1] \) and \( \theta_0 = \mathbb{E}[Y \mid D=0] \) are the population means for the treated and control groups, respectively. Let us define a function \( f : \mathbb{R}^2 \to \mathbb{R} \) that maps the parameter vector \( \bm\theta = (\theta_0, \theta_1)^\top \) to the relative effect:
\[
f(\bm\theta) = \frac{\theta_1 - \theta_0}{\theta_0}.
\]
This simplifies to:
\[
f(\bm\theta) = \frac{\theta_1}{\theta_0} - 1.
\]

\paragraph{Delta Method Application.}
To obtain the large-sample distribution of the estimator \( f(\hat{\bm\theta}) \), we use the delta method. First, compute the gradient vector:
\[
G = \nabla f(\bm\theta)
= \begin{bmatrix}
    \frac{\partial f}{\partial \theta_0} \\
    \frac{\partial f}{\partial \theta_1}
  \end{bmatrix}
= \begin{bmatrix}
    -\frac{\theta_1}{\theta_0^2} \\
    \frac{1}{\theta_0}
  \end{bmatrix}.
\] The asymptotic distribution of \( \hat{\bm\theta} = (\hat{\theta}_0, \hat{\theta}_1)^\top \),
\[
\sqrt{n}(\hat{\bm\theta} - \bm\theta)
\xrightarrow{d}
\mathcal{N}\left( \mathbf{0}, V \right),
\]
where \( V \) is the asymptotic covariance matrix of \( \hat{\bm\theta} \), given by:
\[
V =
\begin{bmatrix}
    \sigma_0^2 / p_0 & 0 \\
    0 & \sigma_1^2 / p_1
\end{bmatrix}.
\] Then by the delta method, the relative effect estimator satisfies:
\[
\sqrt{n}\left( f(\hat{\bm\theta}) - f(\bm\theta) \right)
\xrightarrow{d}
\mathcal{N}(0,\, G^\top V G).
\]

\paragraph{Use Cases:}
This approach allows us to construct standard errors and confidence intervals for multiplicative effect measures such as:
\begin{enumerate}
    \item \textbf{Benefit-Cost Ratios} in economics,
    \item \textbf{Vaccine Efficacy}, typically computed as \(1 - \text{RR}\),
    \item \textbf{Rate Ratios or Relative Risks} in epidemiological studies.
\end{enumerate} It is particularly useful when the absolute effect is less interpretable or when proportional impacts provide more policy-relevant insights.

\section{Covariate Adjustment and Treatment Effect}
\label{subsec:covariate-adjustment}

In many randomized experiments and observational studies, baseline covariates \( W \) — such as age, gender, income, health status, or education — may influence both the outcome and the treatment effect. To account for this, we define the \textbf{Conditional Average Treatment Effect (CATE)}:
\[
\delta(W) = \mathbb{E}[Y(1) - Y(0) \mid W].
\]
This quantity captures how the causal effect of treatment varies across individuals or subgroups characterized by different values of \( W \). In essence, it generalizes the average treatment effect (ATE) to a conditional level, allowing for \emph{treatment effect heterogeneity}.

Under the assumption of random treatment assignment (e.g., in a randomized controlled trial), the treatment indicator \( D \in \{0,1\} \) is independent of potential outcomes conditional on \( W \). That is,
\[
Y(1), Y(0) \perp\!\!\!\perp D \mid W.
\]
This implies the equality:
\[
\mathbb{E}[Y \mid D = d, W] = \mathbb{E}[Y(d) \mid W],
\quad \text{for } d \in \{0,1\}.
\]
Consequently, we can define the \textbf{Conditional Average Predictive Effect}:
\[
\pi(W) = \mathbb{E}[Y \mid D = 1, W] - \mathbb{E}[Y \mid D = 0, W],
\]
and under ignorability, we obtain:
\[
\pi(W) = \delta(W).
\]
That is, the difference in conditional means from observed data correctly identifies the causal effect for covariate strata.

\paragraph{Efficiency gains through covariate adjustment.}
Even in randomized trials, adjusting for baseline covariates can lead to efficiency gains. By modeling outcome variation across covariates using methods such as ANCOVA (Analysis of Covariance), regression adjustment, or modern machine learning methods like Targeted Minimum Loss Estimation (TMLE) or double machine learning, one can:
\begin{enumerate}
    \item Reduce residual variance,
    \item Improve precision of treatment effect estimates,
    \item Construct narrower confidence intervals,
    \item Detect heterogeneity across subgroups.
\end{enumerate}
These techniques leverage the predictive power of covariates \( W \) to "de-noise" the treatment effect estimation.

If covariates \( W \) contain missing values, simply removing such observations (complete-case analysis) may introduce selection bias — especially if the missingness mechanism is \textbf{not completely at random}, for instance, if it depends on both the treatment assignment \( D \) and the outcome \( Y \). In such settings, valid inference requires more sophisticated strategies:
\begin{enumerate}
    \item \textbf{Multiple Imputation:} Fill in missing values using posterior draws from predictive models.
    \item \textbf{Inverse Probability Weighting for Missingness:} Weight complete cases by the inverse of their response probabilities.
    \item \textbf{Doubly Robust Methods:} Combine imputation and weighting to ensure consistency even if one of the models is misspecified.
\end{enumerate}

\bigskip
Covariate adjustment is a powerful tool for improving efficiency and exploring heterogeneity in treatment effects. However, proper handling of missing covariate data is essential to maintain the validity of causal conclusions.

\section{Limitations of RCTs}

\paragraph{Violations of SUTVA.}
The Stable Unit Treatment Value Assumption has two clauses:
(i) no interference among units,
(ii) no hidden versions of treatment.  
Herd‑immunity spill‑overs, market‑equilibrium effects, and varying
implementation fidelity each breach SUTVA, complicating causal interpretation.

\paragraph{Ethical constraints.}
Randomly assigning harmful exposures (smoking initiation, poverty) is
unacceptable; equipoise, informed consent, and beneficence govern study design.
When RCTs are impossible, quasi‑experimental designs—difference‑in‑differences,
instrumental variables, regression discontinuity—step in.

\paragraph{Cost and power.}
Field trials can be logistically daunting and expensive, especially for rare
outcomes requiring huge \(n\) for adequate power.  Cluster randomization eases
logistics but inflates variance via intraclass correlation.

\paragraph{External validity.}
An RCT estimates the ATE for its sampling frame, yet effects may attenuate or
amplify when scaled up.  Transportability analysis confronts such ‘‘all else
equal’’ fallacies.  
\section{Conclusion \& References}
Randomization underpins modern drug licensing (FDA), evidence‑based development economics (Banerjee, Duflo, Kremer), and behavioural interventions (Thaler’s nudge agenda).  While critics highlight cost and context specificity, the clarity of the causal identification it affords keeps RCTs the gold standard against which observational methods are benchmarked.
\begin{enumerate}
  \item Victor Chernozhukov et al.\ (2023).  
        \emph{Applied Causal Inference Powered by Machine Learning and AI},
        Chapter 2.  
        \url{https://causalml-book.org/}
  \item Sylvain Chabé‑Ferret (2024).  
        \emph{Statistical Tools for Causal Inference}, Chapter 3: Randomized
        Controlled Trials.  
        \url{https://chabefer.github.io/STCI/RCT.html}
  \item Imbens \& Rubin (2015).  
        \emph{Causal Inference for Statistics, Social, and Biomedical
        Sciences}.  Cambridge UP.
\end{enumerate}
\newpage

\chapter[Double ML]{Double Lasso, Neyman Orthogonality, and Related Methods}

\label{subsec:highdim-breakdown}

In classical linear regression theory, the number of covariates \( p \) is assumed to be small relative to the sample size \( n \). Under this assumption, the Ordinary Least Squares (OLS) estimator is well-behaved: it is unbiased, efficient, and asymptotically normal under standard regularity conditions. This framework underlies much of traditional econometric analysis. However, a few modern data applications often violate the low-dimensional assumption. For example:
\begin{enumerate}
    \item In genetics, DNA microarray data may involve \( p \sim 10^4 \) gene expression levels for a few hundred patients.
    \item In digital marketing, clickstream data can generate \( p \sim 10^5 \) binary indicators from user behavior.
    \item In cross-country development research, datasets may include dozens of institutional indicators across fewer than only 200 countries.
\end{enumerate} When the number of covariates \( p \) is large relative to, or even exceeds, the sample size \( n \), traditional OLS estimation fails in several ways:
\begin{enumerate}
    \item The matrix \( (X^\top X)^{-1} \) becomes ill-conditioned or undefined when \( p \ge n \), making the OLS solution either unstable or non-existent.
    \item Including all variables indiscriminately leads to inflated variance and overfitting, degrading out-of-sample predictive performance.
    \item Conversely, performing aggressive pre-screening of covariates risks omitting important variables, introducing bias in coefficient estimates.
\end{enumerate} To address these challenges, modern regression analysis often relies on \textbf{regularization techniques}, which penalize model complexity and encourage sparsity. Among the most widely used is the \textbf{Least Absolute Shrinkage and Selection Operator (Lasso)}. Lasso estimation solves the optimization problem:
\[
\hat{\beta}^{\text{lasso}} = \arg\min_{\beta} \left\{ \frac{1}{2n} \| Y - X\beta \|_2^2 + \lambda \|\beta\|_1 \right\},
\]
where \( \lambda \) is a tuning parameter controlling the degree of shrinkage. The Lasso estimator offers key advantages in high-dimensional settings:
\begin{enumerate}
    \item It performs automatic variable selection by shrinking some coefficients exactly to zero.
    \item It provides sparse solutions that improve interpretability and predictive performance.
\end{enumerate} However, Lasso introduces a critical trade-off. The \emph{shrinkage bias} it induces pulls estimated coefficients toward zero, which invalidates standard inferential tools such as \( t \)-tests and confidence intervals derived from OLS theory. This creates a fundamental tension: how can we retain the predictive power and sparsity benefits of Lasso while also conducting valid statistical inference on key parameters—such as the effect of a treatment or policy variable—amid high-dimensional confounding? Answering this question leads us toward \textbf{double selection methods}, \textbf{debiased Lasso}, and \textbf{double machine learning}, which build on Lasso's strengths while correcting for its limitations.
\section{Frisch–Waugh–Lovell (FWL) Partialling‑Out}
\label{subsec:fwl-partialling}

Consider the linear regression model:
\begin{equation}
Y = \alpha D + \beta^\top W + \varepsilon,
\label{eq:model}
\end{equation}
where:
\begin{enumerate}
    \item \( Y \) is the outcome variable of interest (e.g., economic growth, wage, blood pressure, etc.),
    \item \( D \in \mathbb{R} \) is the \textbf{target regressor} whose coefficient \( \alpha \) represents the causal or predictive effect we wish to estimate,
    \item \( W \in \mathbb{R}^p \) is a high-dimensional vector of covariates or controls,
    \item \( \varepsilon \) is an idiosyncratic disturbance satisfying the orthogonality condition \( \mathbb{E}[\varepsilon \mid D, W] = 0 \).
\end{enumerate}

\paragraph{Frisch–Waugh–Lovell (FWL) Theorem:} In classical low-dimensional settings, \(\hat{\alpha}_{\text{OLS}} \) can be equivalently obtained by the following two-step residual regression procedure:
\begin{enumerate}
    \item First, regress \( Y \) on \( W \) and obtain the residuals \( \tilde{Y} = Y - \hat{\pi}_Y^\top W \).
    \item Then, regress \( D \) on \( W \) and compute the residuals \( \tilde{D} = D - \hat{\pi}_D^\top W \).
    \item Finally, regress \( \tilde{Y} \) on \( \tilde{D} \) to obtain \( \hat{\alpha} \) from the simple regression:
    \[
    \tilde{Y} = \alpha \tilde{D} + \varepsilon.
    \]
\end{enumerate} By removing the linear effects of \( W \) from both the outcome \( Y \) and the regressor of interest \( D \), we isolate the component of \( D \) that is uncorrelated with the confounders \( W \), allowing us to estimate its effect on \( Y \) without bias from omitted variables. When the number of controls \( p \) becomes large relative to the sample size \( n \), classical OLS techniques break down. The matrix \( (W^\top W)^{-1} \) becomes ill-conditioned or non-invertible, and even if inversion is numerically feasible, the variance of \( \hat{\alpha} \) can explode due to overfitting. As a result, standard OLS-based residualization becomes unreliable or impossible.

\bigskip
To overcome these limitations, modern econometrics uses regularized regression—specifically, the \textbf{Double Lasso} or \textbf{Partialling-Out Lasso} method. Instead of projecting \( Y \) and \( D \) onto \( W \) via OLS, we perform two separate Lasso regressions.

\bigskip
First, we run a Lasso of $Y$ on $W$:
\[
\hat{\gamma}_{Y}
  \;=\;
  \arg\min_{\gamma\in\mathbb{R}^{p}}
  \Bigl\{\sum_{i=1}^{n}(Y_i-\gamma^{\!\top}W_i)^2
        \;+\;\lambda_1\sum_{j=1}^{p}\psi_{Yj}\,|\gamma_j|
  \Bigr\}.
\] The penalty loadings $\psi_{Yj}$ (often $\psi_{Yj}=1/\hat\sigma_j$) scale the
$\ell_1$ constraint so that units of different regressors are comparable. Independently we fit a Lasso of $D$ on $W$:
\[
\hat{\gamma}_D
  \;=\;\arg\min_{\gamma}\,
  \Bigl\{
    \sum_{i}(D_i-\gamma^{\!\top}W_i)^2
    +\lambda_2\sum_{j}\psi_{Dj}\,|\gamma_j|
  \Bigr\}.
\]
Both~$\lambda_1$ and~$\lambda_2$ can be chosen via modified cross‑validation or
the theoretically motivated formula
$\lambda\asymp\sigma\sqrt{2\log p/n}$. We then compute
\[
\check Y_i = Y_i - \hat{\gamma}_Y^{\!\top} W_i,
\quad
\check D_i = D_i - \hat{\gamma}_D^{\!\top} W_i.
\]
These are the pieces of $Y$ and $D$ orthogonal, \emph{up to Lasso error}, to
the entire $W$ space. Regressing the residualized outcome on residualized treatment gives
\[
\hat\alpha
  \;=\;
  \frac{\sum_i \check D_i\check Y_i}{\sum_i\check D_i^{\,2}}
  \;=\;
  \bigl(\check D^{\!\top}\check D\bigr)^{-1}
  \bigl(\check D^{\!\top}\check Y\bigr).
\] Because $(\check D_i)$ is a scalar, the matrix inverse above is merely a division.

\paragraph{Intuition.}
The union of variables selected in Step 1 or Step 2 acts like \emph{double selection}. Any regressor important for predicting either $Y$ \emph{or} $D$ is implicitly adjusted for.  Consequently the omitted variable bias of missing ``weak but critical'' controls is dramatically
reduced compared with ``single'' Lasso selection that screens only on $Y$.

\section{Neyman Orthogonality.}
The success of Double Lasso relies on a powerful statistical property called \textbf{Neyman orthogonality}. It ensures that small errors in the estimation of nuisance components (i.e., the Lasso fits for \( Y \) and \( D \), discussed below) do not significantly bias the final estimate of \( \alpha \). Specifically, the estimation procedure is \emph{first-order insensitive} to small perturbations in the nuisance parameters—an essential requirement for valid inference in high-dimensional models. In contrast, methods that do not satisfy Neyman orthogonality—such as naive single Lasso selection—allow first-stage bias in $\hat{\eta}$ to leak directly into $\hat{\alpha}$ through linear terms. This contamination prevents the estimator from achieving $\sqrt{n}$-rate convergence and undermines valid statistical inference. Thus, Neyman orthogonality is not merely a technical requirement but a foundational design principle that enables double machine learning procedures to perform reliable inference in high-dimensional settings where traditional OLS methods fail.
We prove it next. Let $\alpha(\eta)$ be the target parameter defined implicitly by the moment condition
\[
M(\alpha,\eta) := \mathbb{E}\left[(\tilde{Y}(\eta) - \alpha \tilde{D}(\eta)) \tilde{D}(\eta)\right] = 0,
\]
where
\[
\tilde{Y}(\eta) = Y - \eta_1^\top W, \qquad \tilde{D}(\eta) = D - \eta_2^\top W.
\]
Here, $\eta = (\eta_1, \eta_2)$ represents a collection of nuisance parameters, and $\eta^o$ denotes their true values. The property of \textbf{Neyman orthogonality} requires that the estimator $\alpha(\eta)$ be locally insensitive to small perturbations in $\eta$, i.e.,
\[
\partial_{\eta} \alpha(\eta^o) = 0.
\]

In high-dimensional settings, nuisance parameters $\eta$ are typically estimated (e.g., via Lasso), introducing slight bias. Neyman orthogonality ensures that these small estimation errors do not influence $\alpha$ to first order. Formally, for a small perturbation $\delta$ around the true nuisance value $\eta^o$,
\[
\alpha(\eta^o + \delta) = \alpha(\eta^o) + \underbrace{\partial_\eta \alpha(\eta^o) \cdot \delta}_{=0} + \text{higher-order terms}.
\]

\begin{proof}

Since $\alpha(\eta)$ is implicitly defined via the moment equation $M(\alpha,\eta) = 0$ and the \textbf{implicit function theorem} gives:
\[
\partial_{\eta} \alpha(\eta^o) = -\left[\partial_{\alpha} M(\alpha,\eta^o)\right]^{-1} \partial_{\eta} M(\alpha,\eta^o).
\]
To show that $\partial_{\eta} \alpha(\eta^o) = 0$, it suffices to demonstrate:
\[
\partial_{\eta} M(\alpha,\eta^o) = 0.
\] Recall that
\[
M(\alpha,\eta) = \mathbb{E}\left[(\tilde{Y}(\eta) - \alpha \tilde{D}(\eta)) \tilde{D}(\eta)\right],
\]
with
\[
\tilde{Y}(\eta) = Y - \eta_1^\top W, \qquad \tilde{D}(\eta) = D - \eta_2^\top W.
\] Differentiating with respect to $\eta_1$ we get, 
\[
\partial_{\eta_1} \tilde{Y}(\eta) = -W \quad \Rightarrow \quad \partial_{\eta_1} M(\alpha,\eta^o) = \mathbb{E}\left[-W \cdot \tilde{D}(\eta^o)\right].
\]
Since $\tilde{D}(\eta^o)$ is the residual from regressing $D$ on $W$, it is orthogonal to $W$:
\[
\mathbb{E}\left[W \cdot (D - \gamma_{D,W}^\top W)\right] = 0,
\]
so
\[
\partial_{\eta_1} M(\alpha,\eta^o) = 0.
\] and the Derivative with respect to \(\eta_2\) gives, 
\[
\partial_{\eta_2} \tilde{D}(\eta) = -W.
\]So, the product rule gives
\begin{align*}
\partial_{\eta_2} M(\alpha,\eta^o)
&= \mathbb{E} \left[ \partial_{\eta_2} \left( \tilde{Y}(\eta^o) \cdot \tilde{D}(\eta^o) \right) - \alpha \cdot \partial_{\eta_2} \left( \tilde{D}(\eta^o)^2 \right) \right] \\
&= - \mathbb{E} \left[ W \cdot \tilde{Y}(\eta^o) \right] + 2\alpha \cdot \mathbb{E} \left[ W \cdot \tilde{D}(\eta^o) \right].
\end{align*}
Again, $\tilde{Y}(\eta^o)$ and $\tilde{D}(\eta^o)$ are residuals from projecting $Y$ and $D$ on $W$, respectively. Therefore:
\[
\mathbb{E}[W \cdot \tilde{Y}(\eta^o)] = 0, \quad \mathbb{E}[W \cdot \tilde{D}(\eta^o)] = 0.
\]
So,
\[
\partial_{\eta_2} M(\alpha,\eta^o) = 0.
\] Since both partial derivatives vanish, and it follows from the implicit function theorem that:
\[
\partial_{\eta} \alpha(\eta^o) = - \left[\partial_{\alpha} M(\alpha,\eta^o)\right]^{-1} \cdot 0 = 0.
\]
    
\end{proof}


\section{Asymptotic Theory: \texorpdfstring{$\sqrt{n}$}{sqrt-n}‑Normality of \texorpdfstring{$\hat\alpha$}{alpha}}
\label{subsec:asymptotic_normality}

Under appropriate sparsity and moment assumptions—such as the condition that the sparsity level $s$ satisfies ($s \log p \ll n$—Belloni, Chernozhukov, and Hansen (2014)) a central limit theorem for the double machine learning estimator $\hat\alpha$ can be established.

\bigskip
Let $\tilde D$ denote the residual from the population-level projection of $D$ on $W$, and let $\varepsilon$ denote the structural error from the partially linear model
\[
Y = \alpha D + \beta^\top W + \varepsilon, \qquad \mathbb{E}[\varepsilon \mid D, W] = 0.
\]
Then the debiased or orthogonalized estimator $\hat\alpha$ satisfies:
\[
\sqrt{n}(\hat\alpha - \alpha)
  \xrightarrow{d}
  \mathcal{N}\left(0,\;
    V := \mathbb{E}[\tilde D^2]^{-1} \,
         \mathbb{E}[\tilde D^2 \varepsilon^2] \,
         \mathbb{E}[\tilde D^2]^{-1}
  \right).
\] A consistent plug-in estimator $\hat{V}$ replaces the population expectations with their sample analogues:
\[
\hat{V} = \left(\frac{1}{n} \sum_{i=1}^n \tilde D_i^2 \right)^{-2}
\left( \frac{1}{n} \sum_{i=1}^n \tilde D_i^2 \hat\varepsilon_i^2 \right),
\]
where $\hat\varepsilon_i = Y_i - \hat\alpha \cdot D_i - \hat\beta^\top W_i$ is the estimated residual from the final stage. Using this variance estimator, a Wald-type $95\%$ confidence interval for $\alpha$ is given by:
\[
\boxed{
  \hat\alpha \;\pm\; 1.96\,\sqrt{\frac{\hat V}{n}}
}
\]
This interval has asymptotically correct coverage, with error rate converging to zero at the standard $\mathcal{O}(n^{-1/2})$ rate.

\bigskip
A key strength of this double machine learning procedure is its robustness to imperfect model selection. The validity of inference for $\alpha$ does not require perfect variable selection in the high-dimensional regressions of $Y$ and $D$ on $W$. Instead, it suffices that the Lasso-based nuisance estimators achieve a sufficiently accurate approximation of the best $s$-sparse linear predictors. This flexibility is due to the Neyman orthogonality property, which guarantees that small estimation errors in the first stage do not translate into first-order biases in the second-stage estimate of the target parameter.

\section{Conditional Convergence in Growth Economics}
\label{subsec:conditional_convergence}

A central hypothesis in development economics is the theory of \textbf{conditional convergence}, which posits that poorer countries tend to grow faster than richer ones, \emph{conditional} on similar structural characteristics. In other words, once we control for institutional quality, educational attainment, trade openness, and other growth determinants, we expect countries with lower initial income to exhibit higher subsequent growth rates. This hypothesis translates into a regression framework where the dependent variable is the economic growth rate and the main regressor of interest is initial GDP per capita:
\[
\text{Growth}_i
  = \alpha\,\text{InitialGDP}_i
    + \beta^{\!\top}W_i + \varepsilon_i,
\]
where:
\begin{enumerate}
  \item $\text{Growth}_i$ is the GDP growth rate for country $i$ over a fixed period (e.g., 10 years),
  \item $\text{InitialGDP}_i$ is the log of initial GDP per capita,
  \item $W_i$ is a high-dimensional vector of control variables, including proxies for human capital, institutional quality, demographic structure, trade openness, and geographic characteristics (here, $p = 60$),
  \item and $n = 90$ denotes the number of countries in the sample.
\end{enumerate}

\bigskip
In low-dimensional settings, ordinary least squares (OLS) would be the default estimator. However, when the number of controls $p$ is comparable to or exceeds the sample size $n$, OLS becomes unstable and prone to overfitting. Moreover, the inclusion of many irrelevant controls inflates variance, resulting in wide confidence intervals and reduced statistical power. Double Lasso, on the other hand, reduces variance inflation and improves inference validity in high-dimensional settings.

\begin{center}
\begin{tabular}{@{}lccc@{}}
\toprule
\textbf{Method} & \textbf{Estimate} & \textbf{SE} & \textbf{95\% CI} \\
\midrule
OLS            & $-0.009$ & 0.032 & $[-0.073,\;0.054]$ \\
Double Lasso   & $-0.045$ & 0.018 & $[-0.080,\;-0.010]$ \\
\bottomrule
\end{tabular}
\end{center} The OLS estimate is close to zero and statistically insignificant, as its confidence interval includes zero. This reflects the high noise-to-signal ratio in the presence of many controls relative to sample size. In contrast, the Double Lasso estimate is larger in magnitude and statistically significant at conventional levels. Its tighter confidence interval excludes zero, offering empirical support for the hypothesis of conditional convergence.

\bigskip
The negative and significant estimate of $\alpha$ under Double Lasso suggests that, holding constant various institutional and structural characteristics, countries with lower initial GDP indeed tend to grow faster. This aligns with the predictions of the \href{https://en.wikipedia.org/wiki/Solow%E2%80%93Swan_model}{Solow growth model}.

\section{Beyond a Single Treatment}

Researchers often wish to infer a vector
\(\bm\alpha=(\alpha_1,\dots,\alpha_{p_1})\) attached to multiple policy
variables \(D_1,\dots,D_{p_1}\).  Double Lasso generalizes by residualizing
\emph{each} \(D_\ell\) on the control set \(W\) and running separate low‑dimensional
regressions.  Simultaneous confidence bands may be constructed via the Gaussian
multiplier bootstrap, controlling family‑wise error in the presence of
many targets.

Applications include interaction effects (policy $\times$ industry),
non‑linear polynomial expansions ($D$, $D^2$, $D^3$), and heterogeneity by
demographics.

\section{Alternative Orthogonal Methods}

An alternative to residualization in partialling-out is the \textbf{Double Selection} method. In this approach, two separate Lasso regressions are performed:

\begin{enumerate}
  \item Regress the outcome \( Y \) on the full set of covariates \( (D, W) \) using Lasso. Let \( \widehat{S}_Y \subseteq \{1,\ldots,p\} \) denote the selected variables (excluding \( D \)).
  \item Regress the treatment variable \( D \) on the covariates \( W \) using Lasso. Let \( \widehat{S}_D \subseteq \{1,\ldots,p\} \) denote the selected variables.
\end{enumerate} The final model includes the treatment \( D \) and the union of selected controls \( \widehat{S} = \widehat{S}_Y \cup \widehat{S}_D \). An ordinary least squares (OLS) regression of \( Y \) on \( D \) and the selected variables \( W_{\widehat{S}} \) is then performed to estimate the treatment effect. This estimator:
\begin{enumerate}
  \item Retains the \textbf{Neyman orthogonality} property,
  \item Is robust to model selection errors in either the outcome or treatment model,
  \item is typically more conservative—favoring inclusion of relevant covariates—than single-selection or naive Lasso.
\end{enumerate} Another method is The \textbf{Debiased Lasso}, also known as the \textbf{Desparsified Lasso}, addresses the regularization bias inherent in standard Lasso estimates. It constructs an estimator with an asymptotic linear expansion by correcting the shrinkage in the Lasso estimate. The procedure has been outlined below.
\begin{enumerate}
    \item Run a Lasso regression of \(Y\) on \(D\) and \(W\) (with an appropriate penalty \(\lambda\)) to obtain \(\hat{\beta}\).
    \item Run a Lasso regression of \(D\) on \(W\) (with a suitable \(\lambda\)) to obtain \(\hat{\gamma}\).
    \item Construct the residualized treatment:
    \[
    \tilde{D}(\hat{\gamma}) = D - \hat{\gamma}'W.
    \]
    \item Obtain the estimator \(\hat{\alpha}\) as the solution to the moment condition:
    \[
    \frac{1}{n}\sum_{i=1}^n \left( Y_i - \hat{\alpha} D_i - \hat{\beta}'W_i \right)\tilde{D}_i(\hat{\gamma}) = 0.
    \]
    This yields the explicit formula:
    \[
    \hat{\alpha} = \left( \frac{1}{n}\sum_{i=1}^n D_i\,\tilde{D}_i(\hat{\gamma}) \right)^{-1} \left( \frac{1}{n}\sum_{i=1}^n \left(Y_i-\hat{\beta}'W_i\right)\tilde{D}_i(\hat{\gamma}) \right).
    \]
\end{enumerate} The debiased estimator for a target coefficient \( \alpha \) has the form:
\[
\hat\alpha
  = \hat\alpha_{\text{Lasso}}
    + \frac{1}{n}\sum_{i=1}^{n} \hat\Theta^{\top} X_i
      \left(Y_i - X_i^{\top} \hat\beta_{\text{Lasso}}\right),
\]
where:
\begin{enumerate}
  \item \( \hat\beta_{\text{Lasso}} \) is the Lasso estimate from regressing \( Y \) on the full covariate vector \( X \),
  \item \( \hat\Theta \) is an estimate of the relevant row of the (pseudo-)inverse of the empirical Gram matrix \( \widehat{\Sigma} = X^{\top}X/n \).
\end{enumerate} This adjustment removes the bias from penalization and restores asymptotic normality:
\[
\sqrt{n}(\hat\alpha - \alpha) \xrightarrow{d} N(0, \sigma^2),
\]
even in high-dimensional settings where \( p \gg n \). Confidence intervals and hypothesis tests constructed from this estimator are asymptotically valid.

\bigskip
In the special case of a pure randomized controlled trial (RCT), the treatment \( D \) is independent of the covariates \( W \) by design:
\[
D \perp\!\!\!\perp W.
\]
Therefore, the naive OLS regression of \( Y \) on \( D \) yields an unbiased estimate of the average treatment effect (ATE), even without adjusting for \( W \). While orthogonal methods like Double Lasso or Debiased Lasso are not necessary to ensure unbiasedness in this setting, they can still be employed to increase estimation efficiency (i.e., reduce variance) and improve precision of inference when \( W \) explains a substantial amount of variation in \( Y \). While randomization solves the identification problem, orthogonal adjustments still play a role in optimizing statistical performance.

\begin{enumerate}
    \item Run a Lasso regression of \(Y\) on \(D\) and \(W\) (with an appropriate penalty \(\lambda\)) to obtain \(\hat{\beta}\).
    \item Run a Lasso regression of \(D\) on \(W\) (with a suitable \(\lambda\)) to obtain \(\hat{\gamma}\).
    \item Construct the residualized treatment:
    \[
    \tilde{D}(\hat{\gamma}) = D - \hat{\gamma}'W.
    \]
    \item Obtain the estimator \(\hat{\alpha}\) as the solution to the moment condition:
    \[
    \frac{1}{n}\sum_{i=1}^n \left( Y_i - \hat{\alpha} D_i - \hat{\beta}'W_i \right)\tilde{D}_i(\hat{\gamma}) = 0.
    \]
    This yields the explicit formula:
    \[
    \hat{\alpha} = \left( \frac{1}{n}\sum_{i=1}^n D_i\,\tilde{D}_i(\hat{\gamma}) \right)^{-1} \left( \frac{1}{n}\sum_{i=1}^n \left(Y_i-\hat{\beta}'W_i\right)\tilde{D}_i(\hat{\gamma}) \right).
    \]
\end{enumerate} Equivalently, the debiased estimator for a target coefficient \( \alpha \) has the form:
\[
\hat\alpha
  = \hat\alpha_{\text{Lasso}}
    + \frac{1}{n}\sum_{i=1}^{n} \hat\Theta^{\top} X_i
      \left(Y_i - X_i^{\top} \hat\beta_{\text{Lasso}}\right),
\]
where:
\begin{enumerate}
  \item \( \hat\beta_{\text{Lasso}} \) is the Lasso estimate from regressing \( Y \) on the full covariate vector \( X \),
  \item \( \hat\Theta \) is an estimate of the relevant row of the (pseudo-)inverse of the empirical Gram matrix \( \widehat{\Sigma} = X^{\top}X/n \).
\end{enumerate} This correction term serves to "undo" the bias introduced by Lasso’s penalization.

\section{Practical Checklist for Empirical Implementation}
\label{subsec:checklist}

When implementing orthogonal estimation methods such as Double Lasso or Debiased Lasso, it is important to adhere to a set of practical best practices that ensure numerical stability, valid inference, and robust results. Below is a detailed checklist to guide empirical implementation:

\begin{enumerate}
  \item \textbf{Centre and Scale Regressors.}\\
  Always standardize the covariates in the high-dimensional matrix \( W \). This involves centering each column to have mean zero and scaling to unit variance. Standardization:
  \begin{enumerate}
    \item Improves numerical stability and convergence in optimization,
    \item Ensures that the Lasso penalty is applied uniformly across variables,
    \item Makes interpretation of coefficients more meaningful, especially when variables are on different scales.
  \end{enumerate}

  \item \textbf{Penalty Selection (\( \lambda \)).}\\
  Choose the regularization parameter \( \lambda \) via one of two common methods:
  \begin{enumerate}
    \item \emph{10-fold Cross-Validation:} Splits the data into 10 parts, trains on 9 and validates on 1, cycling through all folds.
    \item \emph{Theory-Driven Plug-in Method:} Based on theoretical guarantees that balance false inclusion and exclusion probabilities (e.g., Belloni et al. plug-in).
  \end{enumerate}
  Avoid under-penalization, which can lead to overfitting and poor generalization.

  \item \textbf{Inference: Robust Standard Errors.}\\
  Since weights and nuisance parameters are estimated, standard error estimation in the second-stage regression must account for this:
  \begin{enumerate}
    \item Use heteroskedasticity-robust \emph{sandwich estimators},
    \item Or apply the \emph{residual bootstrap} to obtain valid confidence intervals in small samples,
    \item If data are clustered (e.g., by region or school), use clustered standard errors to account for intra-cluster correlation.

  \end{enumerate}

  \item \textbf{Diagnostic Plots.}\\
  Visual inspection of residuals and influential observations can prevent misinterpretation. Recommended diagnostics include:
  \begin{enumerate}
    \item Histogram of \( \check D_i \): the partialled-out regressor, to verify variation,
    \item Leverage scores: to identify high-leverage points,
    \item Cook’s distance: to detect influential observations that could distort estimates.
  \end{enumerate}

  \item \textbf{Sensitivity Analysis.}\\
  Examine robustness of the estimated treatment effect \( \hat\alpha \) by varying the penalty parameter \( \lambda \) by \( \pm 25\% \). If results change dramatically, your model may be fragile or overfit. Stable estimates across a penalty range provide greater empirical confidence.
\end{enumerate}
\section{Notebooks}  
The Jupyter notebooks for the referenced experimentation are available here in \href{https://github.com/Gauranga2022/CMI-MSc-Data-Science/tree/main/Sem4/CausalInference%20(GP)}{Github.}

\section{References}
\begin{enumerate}
  \item Victor Chernozhukov et al.\ (2023).  
        \emph{Applied Causal Inference Powered by Machine Learning and AI},
        Chapter 2.  
        \url{https://causalml-book.org/}
\end{enumerate}
\newpage

\chapter[Metrics to Recover Causal Effects]{Conditioning and Propensity Scores Recover Causal Effects}

In observational studies, the treatment indicator $D$ is almost never randomly assigned. Patients choose therapies, students select majors, and firms adopt technologies according to incentives, preferences, or constraints—factors that often \emph{also} influence the outcome $Y$ we intend to study. The resulting \textbf{selection bias} makes naive comparisons unreliable: average wage differences between trained and untrained workers reflect both training benefits and pre‑existing ability gaps. To address this challenge, two conceptual tools are central:

\begin{enumerate}
  \item \textbf{Conditional ignorability} (also known as \emph{unconfoundedness} or \emph{selection on observables}): This assumption posits that after conditioning on a sufficiently rich set of covariates $X$, the treatment assignment $D$ behaves \emph{as if} it were random. That is, any systematic differences in potential outcomes between treated and untreated units can be fully explained by $X$.
  
  \item \textbf{Overlap} (or \emph{positivity}): This ensures that, for every covariate profile $X$, there is a positive probability of observing both treated and untreated units. Formally, $0 < P(D = 1 \mid X) < 1$ almost surely. Without this condition, comparisons across treatment groups become ill-posed or undefined for certain strata.
\end{enumerate} When both of these conditions hold, researchers can recover the \emph{Average Treatment Effect} (ATE) using standard regression techniques or propensity-score reweighting—even without the benefit of randomized experiments. What follows is a self-contained exposition of how these tools work in practice.

\section{Potential Outcomes Framework}
\label{subsec:potential-outcomes}

In causal inference, the potential outcomes framework provides a rigorous way to define and estimate causal effects. We first set up the core notations that enable the identification of causal quantities from observational data and revisit a few assumptions.

\paragraph{Notation and Potential Outcomes.}

\begin{enumerate}
  \item $D$: A binary treatment indicator where $D = 1$ denotes treatment and $D = 0$ denotes control.
  \item $Y$: The observed outcome variable.
  \item $Y(1),Y(0)$: Potential outcomes—$Y(1)$ is the outcome that would be observed if the unit were treated, and $Y(0)$ is the outcome if the unit were not treated. These together define the \emph{response function} $d \mapsto Y(d)$ for $d \in \{0,1\}$.
  \item $X$: A vector of observed pre-treatment covariates (e.g., age, GPA, income, health status, etc.).
\end{enumerate} We assume the \textbf{Consistency} axiom, which states:
\[
Y = Y(D), \qquad \text{almost surely},
\]
meaning we observe the potential outcome corresponding to the realized treatment $D$.

\paragraph{Assumption of Conditional Ignorability:} For each $d \in \{0,1\}$, treatment assignment is independent of the potential outcome given covariates:
\[
D \;\perp\!\!\!\perp\; Y(d) \mid X.
\] To interpret, within covariate strata $X = x$, the treatment assignment behaves as if randomized. That is, the treated and untreated subpopulations are statistically comparable with respect to their potential outcomes. Any observed difference in outcomes between treated and control units sharing the same covariate values can thus be attributed to the effect of the treatment itself. For example, suppose two students have the same SAT score, family income, and high school quality (i.e., same $X$). If one attends college ($D=1$) and the other does not ($D=0$), conditional ignorability asserts that these students' potential earnings $Y(1), Y(0)$ are unrelated to the college attendance decision—once we have conditioned on $X$:= SAT score, family income, and high school quality.
\bigskip

\paragraph{Assumption of Overlap (Common Support):}

Let $p(X) = P(D = 1 \mid X)$ denote the \textbf{propensity score}, the probability of receiving treatment given covariates. The overlap condition requires:
\[
0 < p(X) < 1 \quad \text{almost surely}.
\] This ensures that every covariate profile $X$ has a positive probability of being assigned to both treatment and control groups. Without this assumption, there would be regions of $X$ where treated or untreated units are completely absent, making comparisons and causal effect estimation ill-posed.

\bigskip

Under these assumptions of Conditional Ignorability and Overlap, we can eliminate selection bias by conditioning on $X$:

\begin{align*}
E[Y \mid D = d, X]
  &= E[Y(d) \mid D = d, X] \quad && \text{(by Consistency)} \\
  &= E[Y(d) \mid X] \quad && \text{(by Ignorability)}.
\end{align*} Therefore, the \textbf{Conditional Average Predictive Effect (CAPE)} given by:
\[
\pi(X) = E[Y \mid D = 1, X] - E[Y \mid D = 0, X],
\] equals the \textbf{Conditional Average Treatment Effect (CATE)}:
\[
\delta(X) = E[Y(1) - Y(0) \mid X].
\] and averaging over the covariate distribution yields the \textbf{Average Treatment Effect (ATE)}:
\[
\delta = E[\delta(X)] = E[\pi(X)].
\] So under the assumptions of ignorability and overlap, causal effects can be identified purely from observational data using regression or weighting methods that condition on $X$, which we check next.

\section{Linear Regression as a Conditioning Device}
\label{subsec:linear-regression}

A standard approach to causal inference under the potential outcomes framework is to use linear regression as a tool for conditioning on observed covariates. This approach becomes valid under the assumption of \textbf{conditional ignorability}.

\bigskip

Suppose we posit the following linear model for the conditional expectation of the outcome:
\begin{equation}
\label{eq:lin}
E[Y \mid D, X] = \alpha D + \beta^{\!\top} X,
\end{equation}
where
\begin{enumerate}
  \item \( \alpha \) represents the average causal effect of the treatment, D
  \item \( \beta \in \mathbb{R}^p \) is a vector of coefficients associated with the covariates \( X \).
\end{enumerate} We then estimate the following linear regression model via ordinary least squares (OLS):
\[
Y = \alpha D + \beta^{\!\top} X + \varepsilon,
\]
where \( \varepsilon \) is a zero-mean error term satisfying \( E[\varepsilon \mid D, X] = 0 \).
Under the assumptions of linearity, ignorability, and overlap, the OLS estimator \( \hat{\alpha} \) is a consistent estimator of the \textbf{Average Treatment Effect} (ATE), denoted:
\[
\delta := E[Y(1) - Y(0)].
\]

\bigskip

To account for heterogeneous treatment effects, we can extend the linear model to allow the treatment effect to vary with covariates by introducing interaction terms between \( D \) and a centered version of the covariates \( X \). The modified model is:
\[
E[Y \mid D, X] =
\alpha_1 D + \alpha_2^{\!\top}(X - \bar{X}) D
+ \beta_1 + \beta_2^{\!\top}(X - \bar{X}),
\]
where:
\begin{enumerate}
  \item \( \bar{X} \) is the sample mean of the covariates,
  \item \( \alpha_1 \) captures the average treatment effect across all covariate strata (ATE),
  \item \( \alpha_2 \) allows the treatment effect to vary with characteristics.
\end{enumerate} In this case, we interpret:
\[
\text{ATE} = \alpha_1,
\qquad
\text{CATE}(X) = \alpha_1 + \alpha_2^{\!\top}(X - \bar{X}).
\]

\bigskip

The validity of this linear regression approach depends on the correctness of the specified functional form. If the linear model in Equation~\eqref{eq:lin} does not represent the true conditional expectation function (CEF), then the resulting estimate \( \hat{\alpha} \) may be biased. When the functional form is in doubt or the relationship between \( Y \), \( D \), and \( X \) is highly non-linear, it is advisable to turn to more flexible estimation techniques such as non-parametric regression or modern machine learning methods (e.g., random forests, boosted trees, neural networks). These approaches can better capture complex patterns in the data and help avoid misspecification bias.

\section{The Horvitz--Thompson Method}

In observational studies, the identification of treatment effects often relies on conditioning on observed covariates to model the \emph{outcome process}, i.e., the conditional expectation function
\[
E[Y \mid D, X].
\]
However, in many real-world applications, this function may be complex and difficult to approximate accurately. As an alternative, we may focus on the \emph{treatment assignment process}, specifically the \textbf{propensity score}, which is defined as
\[
p(X) = P(D = 1 \mid X).
\]
The Horvitz--Thompson (HT) method leverages this score to reweight observed outcomes, providing an alternative path to identification of potential outcome means. This approach is especially effective when the covariate vector \(X\) is high-dimensional, and the propensity score is either known (e.g., from design) or can be reliably estimated.

\bigskip
\def\theorem{\par\noindent{\bf Theorem.\ } \ignorespaces}
\def\endtheorem{}
\begin{theorem}
    Horvitz--Thompson Propensity Score Reweighting):
    Under the assumptions of Conditional Ignorability (\( Y(d) \perp D \mid X \)) and Overlap (\(0 < P(D = d \mid X) < 1\)), the conditional expectation of a properly reweighted observed outcome \(Y\) identifies the conditional mean of the potential outcome \(Y(d)\):
    \[
    E\left[\frac{Y \cdot 1(D = d)}{P(D = d \mid X)} \,\Big|\, X\right] = E[Y(d) \mid X].
    \]
    Taking expectation over the distribution of \(X\), we recover the marginal potential outcome mean:
    \[
    E\left[\frac{Y \cdot 1(D = d)}{P(D = d \mid X)}\right] = E[Y(d)].
    \]
\end{theorem}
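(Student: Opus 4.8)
The plan is to establish the conditional identity first and then obtain the marginal one by the law of iterated expectations. Fix $d \in \{0,1\}$ and work inside the conditional expectation given $X$. The opening move is the consistency axiom $Y = Y(D)$: on the event $\{D = d\}$ we have $Y = Y(d)$, so $Y \cdot 1(D = d) = Y(d)\cdot 1(D = d)$ as an identity of random variables. This swaps the observed outcome in the numerator for the potential outcome $Y(d)$.

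Next, because $P(D = d \mid X)$ is a function of $X$ alone, it is measurable with respect to the conditioning $\sigma$-algebra and can be pulled outside the conditional expectation; this is precisely the point where Overlap is needed, guaranteeing the denominator is strictly positive so the ratio is well-defined. We are left with
\[
E\!\left[\frac{Y\cdot 1(D=d)}{P(D=d\mid X)}\,\Big|\,X\right] = \frac{1}{P(D=d\mid X)}\, E\bigl[Y(d)\,1(D=d)\mid X\bigr].
\]
Now invoke Conditional Ignorability $Y(d) \perp D \mid X$: conditionally on $X$ the variables $Y(d)$ and $1(D=d)$ are independent, so the conditional expectation of their product factors,
\[
E\bigl[Y(d)\,1(D=d)\mid X\bigr] = E[Y(d)\mid X]\cdot E[1(D=d)\mid X] = E[Y(d)\mid X]\cdot P(D=d\mid X).
\]
Substituting back and cancelling $P(D=d\mid X)$ gives $E[Y(d)\mid X]$, which is the first assertion.

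For the marginal statement I would take expectations over $X$ on both sides and apply the tower property: $E\bigl[\,\cdot\,\bigr] = E\bigl[E[\,\cdot\mid X]\bigr] = E\bigl[E[Y(d)\mid X]\bigr] = E[Y(d)]$, using that the weighted quantity is integrable (which follows from the standard finite-variance assumptions on $Y$ together with Overlap bounding the weights).

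The only delicate point — and the step I expect to be the main obstacle to state cleanly — is the factorization $E[Y(d)\,1(D=d)\mid X] = E[Y(d)\mid X]\,P(D=d\mid X)$: one must be careful that Conditional Ignorability is used in the form that makes $Y(d)$ conditionally mean-independent of the event $\{D=d\}$ given $X$, and that integrability is in force so every conditional expectation exists. Everything else is bookkeeping: consistency to replace $Y$ by $Y(d)$, measurability to extract the propensity weight, and iterated expectations to pass from the conditional identity to the marginal one.
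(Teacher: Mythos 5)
Your proposal is correct and follows essentially the same route as the paper's proof: consistency to replace $Y\cdot 1(D=d)$ by $Y(d)\cdot 1(D=d)$, measurability of the propensity score to pull it out of the conditional expectation, conditional ignorability to handle $E[Y(d)\,1(D=d)\mid X]$, and iterated expectations for the marginal claim. The only cosmetic difference is that you factor the product of conditionally independent variables directly, whereas the paper writes $E[Y(d)\,1(D=d)\mid X]=E[Y(d)\mid X, D=d]\,P(D=d\mid X)$ and then drops the conditioning on $D=d$ by ignorability --- these are the same step in two notations.
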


\begin{proof}
    We apply the law of iterated expectations:
    \begin{align*}
    E\left[\frac{Y \cdot 1(D = d)}{P(D = d \mid X)} \,\Big|\, X\right]
    &= E\left[\frac{Y(d) \cdot 1(D = d)}{P(D = d \mid X)} \,\Big|\, X\right] \quad \text{(by consistency: } Y = Y(d) \text{ if } D = d) \\
    &= \frac{1}{P(D = d \mid X)} \, E\left[Y(d) \cdot 1(D = d) \mid X\right] \\
    &= \frac{1}{P(D = d \mid X)} \left(E[Y(d) \mid X, D = d] \cdot P(D = d \mid X)\right) \\
    &= E[Y(d) \mid X] \quad \text{(by ignorability: } Y(d) \perp D \mid X \text{).}
    \end{align*}
    Averaging both sides over \(X\) yields the identification of the average potential outcome \(E[Y(d)]\).
\end{proof}

\bigskip
As a direct application, we can define a transformed variable, the \textbf{Horvitz--Thompson Transform}:
\[
H = \frac{1\{D = 1\}}{P(D = 1 \mid X)} - \frac{1\{D = 0\}}{P(D = 0 \mid X)},
\] and recover the \textbf{Average Treatment Effect (ATE)} via:
\[
\delta = E[Y \cdot H].
\] This formulation ensures that each individual's observed outcome \(Y\) is weighted inversely by their probability of receiving the treatment they actually received, thus correcting for the non-random assignment. Similarly, we define the \textbf{Conditional Average Treatment Effect (CATE)} as
\[
\delta(X) = E[Y \cdot H \mid X].
\]
This formulation highlights how causal effects can be recovered both at the population level and at the covariate level, solely via reweighting based on estimated or known treatment probabilities.


\section{A Remark on Precision and Propensity Scores}

Suppose we are studying the effect of a job training program on wages. The propensity score \( p(X) = P(D=1 \mid X) \) may be estimated using various individual-level covariates such as age, education, and prior work experience. Now imagine two individuals who share the same estimated propensity score. This implies that they have an equal predicted probability of receiving the training, given their observed covariates. However, this similarity in the overall score may mask very different covariate profiles. For example:

\begin{enumerate}
    \item One individual may have substantial work experience and little formal education.
    \item The other may be highly educated but lack practical experience.
\end{enumerate} Even though their propensity scores are the same, their background differences might influence the outcome variable (e.g., wages) in distinct ways. This situation highlights a key limitation of relying solely on propensity scores for identification. Solution lies in Double Machine Learning. By combining both:
\begin{enumerate}
    \item Reweighting (based on the propensity score),
    \item Regression adjustment (using the full covariate vector \(X\)),
\end{enumerate} we can more effectively \emph{de-noise} the outcome variable. That is, we reduce residual variance by accounting for detailed covariate patterns. This hybrid approach—central to \textbf{double machine learning}—produces more efficient and precise estimates of the true treatment effect by leveraging both the treatment selection mechanism and the outcome-generating process.

\bigskip

\section{Covariate Balance Checks}

To assess whether randomization (or reweighting via the propensity score) has succeeded in balancing treatment and control groups, we perform a \textbf{covariate balance check}. Under the assumption of \textbf{conditional ignorability}, it must be that:
\[
E[H \mid X] = 0,
\]
where \( H = \frac{1\{D=1\}}{p(X)} - \frac{1\{D=0\}}{1-p(X)} \) is the Horvitz–Thompson transform. If this expectation does not hold—i.e., if some function of \(X\) is predictive of \(H\)—then systematic differences remain between the treatment and control groups even after reweighting. This implies a violation of ignorability, as treatment assignment remains confounded.

\bigskip
In a low-dimensional setting, we can test balance via regression. Specifically:

\begin{enumerate}
    \item Construct a dictionary \(W = f(X)\), containing linear and non-linear transformations of covariates.
    \item Regress \(H\) on \(W\).
    \item Perform an \(F\)-test to assess whether \(W\) significantly predicts \(H\).
\end{enumerate} If the regression shows predictive power (i.e., \(W\) significantly predicts \(H\)), then covariate imbalance exists—suggesting that the treatment assignment was not successfully randomized or the propensity score model was misspecified. Covariate balance diagnostics are essential validity checks for any causal analysis relying on ignorability. They are especially critical in observational studies where treatment is not randomly assigned.

\section{Directed Acyclic Graphs}

To understand how treatment assignment and potential outcomes are connected, we use \emph{Directed Acyclic Graphs (DAGs)}. These help us visualize the assumptions behind causal inference, particularly the assumption of \textbf{conditional ignorability}.

\begin{center}
\begin{tikzpicture}[scale=0.8, transform shape, ->, >=stealth, thick]
    \node[draw, circle, minimum size=1.2cm] (D) at (-2,0) {\(D\)};
    \node[draw, circle, minimum size=1.2cm] (d) at (0,2) {\(d\)};
    \node[draw, circle, minimum size=1.2cm] (Y) at (2,0) {\(Y(d)\)};
    \node[draw, circle, minimum size=1.2cm] (X) at (0,-2) {\(X\)};
    \draw (X) -- (D);
    \draw (X) -- (Y);
    \draw (d) -- (Y);
\end{tikzpicture}
\end{center}

\noindent \textbf{In this DAG:} 
\begin{enumerate}
    \item The node \(X\) represents observed pre-treatment covariates (e.g., age, education).
    \item The node \(D\) is the actual treatment assignment (e.g., whether the subject receives a drug).
    \item The node \(d\) denotes a hypothetical intervention assigning treatment status.
    \item \(Y(d)\) is the potential outcome under treatment level \(d\).
    \item The edge \(X \to D\) captures how baseline characteristics influence treatment choice.
    \item The edge \(X \to Y(d)\) encodes that these same characteristics may also affect the outcome.
    \item The edge \(d \to Y(d)\) reflects that the potential outcome depends on the treatment level.
\end{enumerate} For example in a healthcare context,
\begin{enumerate}
    \item \(X\): Patient's characteristics (e.g., age, comorbidities).
    \item \(d\): A hypothetical assignment to treatment (\(d=1\): drug, \(d=0\): no drug).
    \item \(Y(d)\): The patient's blood pressure if we hypothetically assign them treatment \(d\).
    \item \(D\): Actual treatment assignment observed in data.
    \item \(X \to D\) and \(X \to Y(d)\): Characteristics influence both treatment decision and health outcome.
    \item \(d \to Y(d)\): The potential outcome depends on which treatment is given.
    \item Under \textbf{ignorability}, once we condition on \(X\), treatment \(D\) behaves like a random assignment: outcome differences between groups with the same \(X\) can be causally attributed to treatment \(D\).
\end{enumerate}

We get the following as an Observed Outcome DAG: \(X \to D\), \(X \to Y\), \(D \to Y\).
\begin{center}
\begin{tikzpicture}[->, >=stealth, thick, node distance=1.5cm]
    \node[draw, circle, minimum size=1cm] (X) {\(X\)};
    \node[draw, circle, minimum size=1cm, above left=of X] (D) {\(D\)};
    \node[draw, circle, minimum size=1cm, above right=of X] (Y) {\(Y\)};
    \draw (X) -- (D);
    \draw (X) -- (Y);
    \draw (D) -- (Y);
\end{tikzpicture}
\end{center}

This DAG depicts the factual world:
\begin{enumerate}
    \item Covariates \(X\) are realized first.
    \item Based on \(X\), each subject is assigned a treatment \(D\).
    \item The outcome \(Y\) is then determined by both the assigned treatment \(D\) and baseline covariates \(X\).
    \item Importantly, we only observe \(Y = Y(D)\), the outcome corresponding to the received treatment, not both potential outcomes.
\end{enumerate}

\section{Putting the Pieces Together}
To implement causal inference in observational studies, based on the above discussion, one must follow a structured approach combining both conceptual clarity and empirical rigor. The first step is \textbf{conceptual}, wherein the researcher articulates the assumptions of \emph{conditional ignorability} and \emph{overlap} within the context of the study and supports this reasoning with a directed acyclic graph (DAG) that visualizes the assumed causal structure. Next comes \textbf{empirical step A}, which involves estimating the treatment effect parameter $\alpha$ by fitting a regression model of the outcome variable $Y$ on the treatment indicator $D$ and covariates $X$. This regression can be linear or based on machine learning algorithms, and should be validated through model diagnostics such as goodness-of-fit and tests for functional form misspecification. \textbf{Empirical step B} focuses on the alternative strategy of weighting: the propensity score $\hat{p}(X)$ is estimated, Horvitz–Thompson (HT) weights $H$ are computed, and covariate balance is evaluated to ensure the weights have successfully removed systematic differences across treatment groups. Finally, in the \textbf{synthesis step}, results from both regression and weighting methods are compared. If the estimates align and covariate balance holds, one can credibly report estimates of the \emph{Average Treatment Effect} (ATE) and the \emph{Conditional Average Treatment Effect} (CATE). If not, the analyst should reconsider the underlying assumptions or explore alternative identification strategies such as instrumental variable methods.

\section*{Recommended Readings}

\begin{enumerate}
  \item Rosenbaum, P. R., \& Rubin, D. B. (1983). ``The Central Role of the Propensity Score in Observational Studies for Causal Effects.'' \emph{Biometrika}, \textbf{70}(1), 41–55.
  
  \item Imbens, G. W., \& Rubin, D. B. (2015). \emph{Causal Inference for Statistics, Social, and Biomedical Sciences: An Introduction}. Cambridge University Press.
  
  \item Hernán, M. A., \& Robins, J. M. (2020). \emph{Causal Inference: What If?} Boca Raton: Chapman \& Hall/CRC. Available at: \url{https://miguelhernan.org/whatifbook}
  
  \item Chernozhukov, V., Wüthrich, K., Kumar, M., Semenova, V., Yadlowsky, S., \& Zhu, Y. (2023). \emph{Applied Causal Inference Powered by Machine Learning and AI}. Available at: \url{https://causalml-book.org/}
\end{enumerate}

\chapter[Graphical Models]{Structured Equations Modelling and Graphical Models}

Statistical data alone inform us about associations, but causal science asks \emph{what would happen if we acted differently}. \textbf{Directed acyclic graphs} (DAGs) encode qualitative subject‑matter knowledge—who can influence whom—while \textbf{structural equation models} (SEMs) supply quantitative functional relationships that generate the joint distribution. Pearl (2000) showed that every (acyclic) SEM induces a DAG and, conversely, that the graph together with independent disturbances suffices to reconstruct
counterfactual outcomes. We begin by exploring a fully linear and nonlinear, nonparametric formulation of causal diagrams and their associated structural equation models (SEMs). These models serve as powerful and flexible tools for uncovering the underlying structure necessary for causal identification, enabling us to move beyond the confines of purely linear assumptions. Within this framework, we define counterfactuals in a formal way—adhering to what Judea Pearl describes as the "First Law of Causal Inference"—which states that every structural equation model naturally induces a system of counterfactual outcomes.

\section{General DAG and SEM via an Example}

\begin{center}
\begin{tikzpicture}[->, >=stealth, thick, node distance=1.5cm]
    \node[draw, circle, minimum size=1cm] (M) {\(M\)};
    \node[draw, circle, minimum size=1cm, above left=of M] (D) {\(D\)};
    \node[draw, circle, minimum size=1cm, above right=of M] (Y) {\(Y\)};
    \node[draw, circle, minimum size=1cm, below=of M] (X) {\(X\)};
    \node[draw, circle, minimum size=1cm, below=of X] (U) {\(U\)};
    \node[draw, circle, minimum size=1cm, above left=of U] (F) {\(F\)};
    
    \draw (D) -- (M);
    \draw (D) -- (Y);
    \draw (M) -- (Y);
    \draw (U) -- (F);
    \draw (U) -- (X);
    \draw (X) -- (M);
    \draw (X) -- (Y);
    \draw (X) -- (D);
    \draw (F) -- (M);
    \draw (F) -- (D);
\end{tikzpicture}
\end{center} We now illustrate a causal diagram and its associated structural equation model (SEM) using a real-world scenario: the impact of 401(k) eligibility on financial wealth. The directed acyclic graph (DAG) below encodes the causal relationships among various observed and unobserved variables involved in this context.

In the United States, a 401(k) plan is an employer-sponsored, defined-contribution, personal pension (savings) account, as defined in section 401(k) of the U.S. Internal Revenue Code. This causal graph represents the possible channels through which 401(k) eligibility (\(D\)) may affect an individual's net financial wealth (\(Y\)). The interpretation of each node is as follows:

\begin{enumerate}
    \item \(D\): Binary indicator for 401(k) eligibility — the treatment variable.
    \item \(Y\): Net financial assets — the outcome of interest.
    \item \(M\): Employer’s matching contribution — a potential mediator between eligibility and wealth.
    \item \(X\): Observed worker-level covariates such as age, income, and job tenure.
    \item \(F\): Unobserved firm-level covariates such as firm size or culture.
    \item \(U\): General latent factors, which may include unmeasured personality traits or risk preferences.
\end{enumerate} The arrows reflect assumed causal relationships. For example, \(D \rightarrow M\) and \(M \rightarrow Y\) represent the fact that eligibility for a 401(k) program may lead to a matching contribution, which in turn could influence net wealth. The variable \(X\) affects many nodes: it influences both treatment assignment (\(D\)) and outcome (\(Y\)), as well as the mediator (\(M\)). The unobserved variables \(U\) and \(F\) introduce latent confounding: \(U\) may simultaneously affect observed covariates \(X\) and firm characteristics \(F\), while \(F\) affects both \(D\) and \(M\).

\bigskip
This DAG can be translated into a structural system of equations where each node is generated as a function of its parent nodes and an associated noise term.

\[
\begin{aligned}
X &= f_X(U, \varepsilon_X), \\
F &= f_F(U, \varepsilon_F), \\
D &= f_D(X, F, \varepsilon_D), \\
M &= f_M(D, X, F, \varepsilon_M), \\
Y &= f_Y(D, M, X, \varepsilon_Y).
\end{aligned}
\] Each equation specifies how a variable is generated as a function of its direct causes (or "parents" in the DAG) and an associated idiosyncratic error term. For example, the covariates \(X\) are influenced by a latent factor \(U\), which could represent unobserved personal attributes such as financial literacy or long-term planning ability, along with a noise component \(\varepsilon_X\). Similarly, the firm-level variable \(F\) is also influenced by \(U\) and its own noise \(\varepsilon_F\), indicating that some unobserved traits may jointly affect both worker- and firm-level characteristics. The treatment assignment \(D\), indicating 401(k) eligibility, is determined by observed covariates \(X\), firm-level factors \(F\), and a residual \(\varepsilon_D\) capturing individual-level randomness in eligibility. The employer’s matching contribution \(M\) depends on \(D\), \(X\), and \(F\), suggesting that employer contributions may vary based on both employee and firm characteristics. Finally, the outcome of interest \(Y\), such as financial wealth, is modeled as a function of the treatment \(D\), mediator \(M\), covariates \(X\), and an independent shock \(\varepsilon_Y\). Together, these equations define a recursive system that captures the flow of causality in the model.

\section{Conditional Ignorability and Exogeneity}

A fundamental challenge in causal inference is to identify the conditions under which we can estimate causal effects from observed data. One such key condition is known as \textit{conditional ignorability} or \textit{conditional exogeneity}. This concept is central to modern causal analysis, as it provides the bridge between the structural assumptions encoded in a causal model and the statistical procedures used for estimation.

\bigskip

The connection between structural equation models (SEMs) and potential outcomes is foundational in causal inference. The fact that an SEM implies the existence of potential outcomes is sometimes called the \emph{First Law of Causal Inference}. SEMs, or their graphical representations as directed acyclic graphs (DAGs), encapsulate the contextual and substantive knowledge about the causal relationships in a given problem. As a result, they allow us to derive, rather than merely assume, important identification conditions such as conditional ignorability.

For example, suppose we are interested in the effect of a binary treatment \( D \) on an outcome \( Y \), and we have observed covariates \( F \) and \( X \). The SEM or DAG for the problem may suggest that, after conditioning on \( F \) and \( X \), the treatment assignment \( D \) is independent of the potential outcome \( Y(d) \) for each value of \( d \). Formally, this is written as:
\[
Y(d) \perp\!\!\!\perp D \mid F, X,
\]
which implies the following equality of conditional expectations:
\[
\mathbb{E}[Y(d) \mid F, X] = \mathbb{E}[Y \mid D = d, F, X].
\]
This property allows us to identify average causal (or treatment) effects by adjusting for (i.e., conditioning on) \( F \) and \( X \). Conditional ignorability (or exogeneity) can be justified using both functional (structural) arguments based on SEMs and graphical arguments based on d-separation and the backdoor criterion.

\bigskip

To provide a functional argument, consider the structural equations that define the data-generating process. In the counterfactual (or potential outcome) setting where we fix \( D = d \), the relevant structural equations are:
\[
Y(d) = f_Y(d, M(d), X, \epsilon_Y),
\]
\[
M(d) = f_M(d, F, X, \epsilon_M),
\]
where \( M \) is a mediator and \( \epsilon_Y, \epsilon_M \) are exogenous noise terms. The actual treatment assignment is generated by
\[
D = f_D(F, X, U, \epsilon_D),
\]
where \( U \) and \( \epsilon_D \) are additional exogenous variables. Once we condition on \( F \) and \( X \), the distribution of \( Y(d) \) is determined solely by \( d \), \( M(d) \), \( X \), and their associated noise terms. The realized value of \( D \) is not relevant for the distribution of \( Y(d) \) once \( F \) and \( X \) are given. In other words, knowing \( D \) provides no additional information about \( Y(d) \) beyond what is already known from \( F \) and \( X \):
\[
Y(d) \perp\!\!\!\perp D \mid F, X.
\]
This structural argument demonstrates how the ignorability condition can be justified by the functional relationships in the SEM.

\bigskip
The same conclusion can be reached using graphical criteria. In the counterfactual DAG, the node \( Y(d) \) receives inputs from \( M(d) \), \( X \), and the fixed value \( d \). The treatment variable \( D \) is still present in the graph and is generated by its usual parents (\( F \), \( X \), and \( U \)), but there is no direct arrow from \( D \) to \( Y(d) \).

Any path from \( D \) to \( Y(d) \) must pass through either \( F \) or \( X \). For instance, typical paths include:
\begin{enumerate}
    \item \( D \leftarrow X \to Y(d) \),
    \item \( D \leftarrow F \to M(d) \to Y(d) \),
    \item \( D \leftarrow F \leftarrow U \to X \to Y(d) \).
\end{enumerate}
By conditioning on \( F \) and \( X \), we block all such paths. In graphical terms, this is known as \emph{d-separation}: conditioning on a node severs the flow of information along any path passing through that node. The Global Markov property then tells us that d-separation implies conditional independence, to conclude that
\[
Y(d) \perp\!\!\!\perp D \mid F, X.
\]

\bigskip

A related graphical concept is the \emph{backdoor criterion}, which provides a practical method for identifying a set of variables \( Z \) that, when conditioned on, blocks all non-causal (backdoor) paths from \( D \) to \( Y \). A set \( Z \) satisfies the backdoor criterion if:
\begin{enumerate}
    \item No variable in \( Z \) is a descendant of \( D \), and
    \item \( Z \) blocks every backdoor path from \( D \) to \( Y \) (that is, every path that starts with an arrow into \( D \)).
\end{enumerate}
The first rule prevents us from blocking the causal effect of \( D \) on \( Y \), while the second ensures that all confounding paths are eliminated. In the context of the 401(k) example, the relevant backdoor paths from \( D \) to \( Y \) run through \( F \) and \( X \), such as:
\begin{enumerate}
    \item \( D \leftarrow X \to Y \),
    \item \( D \leftarrow F \to M \to Y \),
    \item \( D \leftarrow F \leftarrow U \to X \to Y \).
\end{enumerate}
By conditioning on both \( F \) and \( X \), we block all such paths, ensuring that the observed association between \( D \) and \( Y \) reflects only the causal effect of \( D \) on \( Y \).

\section{DAGs, SEMs, and d-Separation}

DAGs and their associated SEMs provide a precise mathematical language for representing causal systems. These concepts are essential for both the design and analysis of causal inference strategies. This section introduces the foundational concepts of directed acyclic graphs (DAGs), their associated structural equation models (SEMs), and the graphical criteria for conditional independence known as d-separation. These tools form the backbone of modern causal inference, providing a rigorous language for representing, analyzing, and reasoning about causal relationships.

\bigskip

A \textbf{directed acyclic graph (DAG)} is a graph \(G = (X, E)\) consisting of a set of nodes \(X = \{X_1, X_2, \dots ,X_{|X| }\}\) and directed edges \(E\), with the critical property that there are no cycles—meaning it is impossible to start at any node and follow a sequence of directed edges that eventually loops back to the starting node. Equivalently, the set of nodes \(V\) is partially ordered by the edge structure \(E\). Within a DAG, several relationships among nodes are defined:
\begin{enumerate}
    \item The \textbf{parents} of a node \(X_j\), denoted \(Pa_j\), are all nodes with directed edges pointing into \(X_j\):
    \[
    Pa_j := \{X_k : X_k \to X_j\}.
    \]
    \item The \textbf{children} of \(X_j\), denoted \(Ch_j\), are all nodes that \(X_j\) points to:
    \[
    Ch_j := \{X_k : X_j \to X_k\}.
    \]
    \item The \textbf{ancestors} of \(X_j\), denoted \(An_j\), are all nodes from which there exists a directed path to \(X_j\), including \(X_j\) itself:
    \[
    An_j := \{X_k : X_k < X_j\} \cup \{X_j\}.
    \]
    \item The \textbf{descendants} of \(X_j\), denoted \(Ds_j\), are all nodes that can be reached by a directed path starting from \(X_j\):
    \[
    Ds_j := \{X_k : X_k > X_j\}.
    \]
\end{enumerate} A DAG can be associated with an \textbf{acyclic structural equation model (ASEM)}, which formalizes how each variable is generated as a function of its parents and some exogenous noise. For each node \(j \in V\), the structural equation is:
\[
X_j := f_j(Pa_j, \epsilon_j),
\]
where each \(\epsilon_j\) is a random disturbance (exogenous variable), and the collection \((\epsilon_j)_{j \in V}\) is assumed to be jointly independent. A \textbf{linear ASEM} is a special case where the structural equations are linear in the parents:
\[
f_j(Pa_j, \epsilon_j) := f'_j Pa_j + \epsilon_j,
\]
with \(f'_j\) being a vector of coefficients. In linear ASEMs, the independence assumption on the errors can be weakened to mere uncorrelatedness. The \textbf{structural potential response process} for each variable describes how the value of \(X_j\) would respond to arbitrary assignments of its parents:
\[
X_j(pa_j) := f_j(pa_j, \epsilon_j),
\]
viewed as a stochastic process indexed by the possible values of the parents.

\bigskip

To generate observable variables in an ASEM, one draws a realization of the exogenous shocks \(\{\epsilon_j\}_{j \in V}\) and then solves the system of equations to obtain the endogenous variables \(\{X_j\}_{j \in V}\). The exogenous variables are those not determined by the model (the \(\epsilon_j\)), while the endogenous variables are those generated as outputs of the structural equations (the \(X_j\)).

\bigskip

A key property of ASEMs associated with DAGs is the \textbf{Markov factorization}, which states that the joint distribution of all variables factorizes as:
\[
p(\{x_\ell\}_{\ell \in V}) = \prod_{\ell \in V} p(x_\ell \mid pa_\ell).
\]
Equivalently, each variable is independent of its non-descendants given its parents—this is known as the \emph{local Markov property}.

Understanding the flow of information and potential confounding in DAGs requires precise definitions of paths and how they may be blocked. Now we define Paths, Blocked Paths, and d-Separation in DAGs
\begin{enumerate}
    \item A \textbf{directed path} is a sequence of nodes connected by edges all pointing in the same direction:
    \[
    X_{v_1} \to X_{v_2} \to \cdots \to X_{v_m}.
    \]
    \item A \textbf{non-directed path} allows some arrows to be reversed.
    \item A node \(X_j\) is a \textbf{collider} on a path if the path includes a segment of the form:
    \[
    \to X_j \leftarrow.
    \]
    \item A \textbf{backdoor path} from \(X_l\) to \(X_k\) is a non-directed path that starts at \(X_l\) and ends with an arrow into \(X_k\).
\end{enumerate} A path, \(\pi\) is said to be \textbf{blocked} by a set of nodes \(S\) if either:
\begin{enumerate}
    \item \(\pi\) contains a chain (\(i \to m \to j\)) or a fork (\(i \leftarrow m \to j\)) with \(m \in S\), or
    \item \(\pi\) contains a collider (\(i \to m \leftarrow j\)), and neither \(m\) nor any of its descendants are in \(S\).
\end{enumerate}
If a path is not blocked by \(S\), it is called \textbf{open}. Conditioning on a node in a chain or fork blocks the path, while conditioning on a collider or its descendant opens a path that would otherwise be blocked. For example, in the figures below, the backdoor path \(Y \leftarrow X \to D\) in (a) is blocked by conditioning on \(X\). Conversely, a path like \(Y \to C \leftarrow D\) in (b) is blocked by default, but becomes open if we condition on the collider \(C\).

\begin{figure}[htbp]
    \centering
    \begin{minipage}{0.45\textwidth}
        \centering
        \begin{tikzpicture}[->, >=stealth, thick, node distance=1.5cm]
            \node[draw, circle, minimum size=1cm] (Z) {\(Z\)};
            \node[draw, circle, minimum size=1cm, right=of Z] (D) {\(D\)};
            \node[draw, circle, minimum size=1cm, right=of D] (Y) {\(Y\)};
            \node[draw, circle, minimum size=1cm, below left=of Y] (X) {\(X\)};
            \draw (Z) -- (D);
            \draw (D) -- (Y);
            \draw (X) -- (Y);
            \draw (X) -- (D);
        \end{tikzpicture}
        \par\vspace{0.5em}
        (a)
    \end{minipage}
    \hfill
    \begin{minipage}{0.45\textwidth}
        \centering
        \begin{tikzpicture}[->, >=stealth, thick, node distance=1.5cm]
            \node[draw, circle, minimum size=1cm] (Z) {\(Z\)};
            \node[draw, circle, minimum size=1cm, right=of Z] (D) {\(D\)};
            \node[draw, circle, minimum size=1cm, right=of D] (Y) {\(Y\)};
            \node[draw, rectangle, minimum size=0.85cm, below left=of Y] (C) {\(C\)};
            \draw (Z) -- (D);
            \draw (D) -- (Y);
            \draw (Y) -- (C);
            \draw (D) -- (C);
        \end{tikzpicture}
        \par\vspace{0.5em}
        (b)
    \end{minipage}
\end{figure} The concept of \textbf{d-separation} provides a graphical criterion for determining conditional independence in DAGs. Given a DAG \(G\), a set of nodes \(S\) is said to d-separate nodes \(X\) and \(Y\) if \(S\) blocks all paths between \(X\) and \(Y\). This is denoted as:
\[
(X \perp\!\!\!\perp_d Y | S)_G.
\]
By the foundational result of Pearl and Verma, d-separation implies conditional independence in the probability distribution generated by the DAG:
\[
X \perp\!\!\!\perp Y \mid S.
\]
Intuitively, if all information flow between \(X\) and \(Y\) is interrupted by conditioning on \(S\), then knowing \(X\) provides no additional information about \(Y\) once \(S\) is known. The formal proof of this equivalence is nontrivial, and the converse does not always hold.

\bigskip

\begin{figure}[htbp]
    \centering
    \begin{minipage}[t]{0.45\textwidth}
        \centering
        \begin{tikzpicture}[->, >=stealth, thick, node distance=1.5cm]
            \node[draw, rectangle, minimum size=0.85cm] (Z) {\(Z\)};
            \node[draw, circle, minimum size=1cm, right=of Z] (X) {\(X\)};
            \node[draw, circle, minimum size=1cm, right=of X] (Y) {\(Y\)};
            \node[draw, rectangle, minimum size=0.85cm, below left=of Y] (U) {\(U\)};
            \draw (Z) -- (X);
            \draw (U) -- (X);
            \draw (U) -- (Y);
            \draw[bend left] (Z) to (Y);
        \end{tikzpicture}
        \par\vspace{0.5em}
        (a)
    \end{minipage}%
    \hfill
    \begin{minipage}[t]{0.45\textwidth}
        \centering
        \begin{tikzpicture}[->, >=stealth, thick, node distance=1.5cm]
            \node[draw, rectangle, minimum size=0.85cm] (Z) {\(Z\)};
            \node[draw, circle, minimum size=1cm, right=of Z] (X) {\(X\)};
            \node[draw, circle, minimum size=1cm, right=of X] (Y) {\(Y\)};
            \node[draw, rectangle, minimum size=0.85cm, below left=of Y] (U) {\(U\)};
            \draw (Z) -- (X);
            \draw (X) -- (U);
            \draw (Y) -- (U);
            \draw[bend left] (Z) to (Y);
        \end{tikzpicture}
        \par\vspace{0.5em}
        (b)
    \end{minipage}
\end{figure} 
To make these concepts concrete, consider the following two graphical structures below. In the first example, suppose the graph contains nodes \(Z\), \(U\), \(X\), and \(Y\), with edges such that \(Z\) and \(U\) are parents of \(X\), \(U\) is also a parent of \(Y\), and there is a direct edge from \(Z\) to \(Y\). Here, the set \(S = \{Z, U\}\) d-separates \(X\) and \(Y\), blocking all paths between them. The Markov factorization yields:
\[
p(y, x \mid u, z) = p(y \mid x, z, u) \, p(x \mid z, u) = p(y \mid u, z) \, p(x \mid z, u),
\]
implying that \(X \perp\!\!\!\perp Y \mid Z, U\).
In the second example, suppose the structure is such that \(Z\) points to \(X\), \(X\) and \(Y\) both point to \(U\), and \(Z\) also points to \(Y\). In this case, conditioning on \(Z\) alone d-separates \(X\) and \(Y\), and the factorization becomes:
\[
p(y, x \mid z) = p(y \mid z) \, p(x \mid z),
\]
which implies \(X \perp\!\!\!\perp Y \mid Z\).

\bigskip

\section{Intervention, Counterfactual DAGs, and SWIGs}

When analyzing causal effects, it is essential to understand how interventions modify the structure of a causal system. This is formalized through the concepts of counterfactual DAGs and Single World Intervention Graphs (SWIGs), which provide a unified graphical framework for reasoning about interventions and potential outcomes.

\bigskip

Consider a causal system represented by a directed acyclic graph (DAG) and its associated structural equation model (SEM). Each node in the DAG corresponds to a variable, and the directed edges represent causal relationships. To analyze the effect of an intervention—such as setting a treatment variable \(X_j\) to a specific value \(x_j\)—we must construct a new graphical object that reflects this hypothetical scenario. The intervention \(\text{fix}(X_j = x_j)\) transforms the original DAG into a counterfactual DAG, known as a Single World Intervention Graph (SWIG). The SWIG is constructed by a node-splitting operation:

\begin{enumerate}
    \item The original node \(X_j\) is split into two distinct entities:
    \begin{enumerate}
        \item \(X_j^*\), representing the natural (pre-intervention) value of \(X_j\).
        \item A new deterministic node, denoted \(a\) (or \(X_a^*\)), set to the intervened value \(x_j\).
    \end{enumerate}
    \item The intervention node \(X_a^*\) inherits only the outgoing edges from \(X_j\) (i.e., \(\widetilde{e}_{ai} = e_{ji}\) for all \(i\)) and has no incoming edges (\(\widetilde{e}_{ia} = 0\) for all \(i\)), reflecting that it is fixed by the intervention.
    \item The node \(X_j^*\) inherits only the incoming edges from \(X_j\) (i.e., \(\widetilde{e}_{ij} = e_{ij}\) for all \(i\)) and has no outgoing edges (\(\widetilde{e}_{ji} = 0\) for all \(i\)), preserving its dependence on its original causes.
    \item All remaining edges are preserved: \(\widetilde{e}_{ik} = e_{ik}\) for all \(i\) and for all \(k \neq j, k \neq a\), ensuring the rest of the graph structure remains intact.
    \item The counterfactual variables are assigned according to:
    \[
    X_k^* := f_k(Pa_k^*, \epsilon_k), \quad \text{for } k \neq a,
    \]
    where \(Pa_k^*\) denotes the parents of \(X_k^*\) under \(\widetilde{E}\), adapting the structural equations to the new graph.
\end{enumerate} The resulting SWIG, denoted \(\widetilde{G}(x_j)\), contains a set of counterfactual variables \(\{X_k^*\}_{k \in V} \cup \{X_a^*\}\), where each variable is now interpreted as a function of the intervention. The counterfactual SEM (CF-ASEM) associated with the SWIG defines the structural equations for the counterfactual variables:
\[
X_k^* := f_k(Pa_k^*, \epsilon_k), \quad \text{for } k \neq a,
\]
where \(Pa_k^*\) denotes the parents of \(X_k^*\) in the modified edge set, and \(\epsilon_k\) are the exogenous noise terms. This construction ensures that the counterfactual variables are generated consistently with the intervention. To illustrate, consider the following diagrams. The left figure shows the original DAG with variables and their causal relationships. The right figure shows the corresponding SWIG after intervening to set variable \(X_j\) to value \(x_j\):

\begin{figure}[htbp]
    \centering
    \begin{minipage}{0.45\textwidth}
        \centering
        \begin{tikzpicture}[->,>=stealth, thick, node distance=1.5cm]
            \node[draw, circle, minimum size=1cm] (X3) {\(X_3\)};
            \node[draw, circle, minimum size=1cm, above left=of X3] (X2) {\(X_j\)};
            \node[draw, circle, minimum size=1cm, above right=of X3] (X6) {\(X_6\)};
            \node[draw, circle, minimum size=1cm, below=of X3] (X4) {\(X_4\)};
            \node[draw, circle, minimum size=1cm, below=of X4] (X5) {\(X_5\)};
            \node[draw, circle, minimum size=1cm, above left=of X5] (X1) {\(X_1\)};
            \draw (X2) -- (X3);
            \draw (X2) -- (X6);
            \draw (X3) -- (X6);
            \draw (X5) -- (X1);
            \draw (X5) -- (X4);
            \draw (X4) -- (X3);
            \draw (X4) -- (X6);
            \draw (X4) -- (X2);
            \draw (X1) -- (X3);
            \draw (X1) -- (X2);
        \end{tikzpicture}
        \par\vspace{0.5em}
        (a) Original DAG
    \end{minipage}
    \hfill
    \begin{minipage}{0.45\textwidth}
        \centering
        \begin{tikzpicture}[->,>=stealth, thick, node distance=1.5cm]
            \node[draw, circle, minimum size=1cm] (X3c) {\(X_3^*\)};
            \node[draw, circle, minimum size=1cm, above left=of X3c] (X2) {\(X_j^*\)};
            \node[draw, circle, minimum size=1cm, above right=of X3c] (X6c) {\(X_6^*\)};
            \node[draw, circle, minimum size=1cm, below=of X3c] (X4) {\(X_4\)};
            \node[draw, circle, minimum size=1cm, below=of X4] (X5) {\(X_5\)};
            \node[draw, circle, minimum size=1cm, above left=of X5] (X1) {\(X_1\)};
            \node[draw, circle, minimum size=1cm, below=of X6c] (X7) {\(X_a^* = x_j\)};
            \draw[dashed] (X7) -- (X3c);
            \draw[dashed] (X7) -- (X6c);
            \draw (X3c) -- (X6c);
            \draw (X5) -- (X1);
            \draw (X5) -- (X4);
            \draw (X4) -- (X3c);
            \draw (X4) -- (X6c);
            \draw (X4) -- (X2);
            \draw (X1) -- (X3c);
            \draw (X1) -- (X2);
        \end{tikzpicture}
        \par\vspace{0.5em}
        (b) SWIG after \(\text{fix}(X_j = x_j)\)
    \end{minipage}
\end{figure}

\bigskip

A central result is that the SWIG encodes the conditional independence relations among counterfactual variables under the intervention. Specifically, suppose we relabel the treatment node as \(D\), and let \(Y\) be any descendant of \(D\). Construct the SWIG induced by \(\text{fix}(D = d)\), and let \(S\) be any subset of nodes common to both the original DAG and the SWIG such that \(Y(d)\) is d-separated from \(D\) by \(S\) in the SWIG. Then:
\begin{enumerate}
    \item The conditional exogeneity condition holds:
    \[
    Y(d) \perp\!\!\!\perp D \mid S.
    \]
    \item The conditional average potential outcome is identified by:
    \[
    \mathbb{E}[g(Y(d)) \mid S = s] = \mathbb{E}[g(Y) \mid D = d, S = s],
    \]
    for all \(s\) with \(p(s, d) > 0\) and for all bounded functions \(g\).
\end{enumerate} To further clarify, consider the following example inspired by Pearl's work. The left diagram shows the original DAG, and the right shows the SWIG after intervening to set \(D = d\):

\begin{figure}[htbp]
    \centering
    \begin{minipage}{0.45\textwidth}
        \centering
        \begin{tikzpicture}[->, >=stealth, thick, node distance=0.85cm]
            \node[draw, circle, minimum size=0.85cm] (X_2) {\(X_2\)};
            \node[draw, circle, minimum size=0.85cm, right=of X_2] (M) {\(M\)};
            \node[draw, circle, minimum size=0.85cm, above left=of X_2] (Z_2) {\(Z_2\)};
            \node[draw, circle, minimum size=0.85cm, right=of Z_2] (X_3) {\(X_3\)};
            \node[draw, circle, minimum size=0.85cm, right=of X_3] (Y) {\(Y\)};
            \node[draw, circle, minimum size=0.85cm, below left=of X_2] (Z_1) {\(Z_1\)};
            \node[draw, circle, minimum size=0.85cm, right=of Z_1] (X_1) {\(X_1\)};
            \node[draw, circle, minimum size=0.85cm, right=of X_1] (D) {\(D\)};
            \draw (Z_2) -- (X_3);
            \draw (X_3) -- (Y);
            \draw (Z_2) -- (X_2);
            \draw (X_2) -- (Y);
            \draw (X_2) -- (D);
            \draw (D) -- (M);
            \draw (M) -- (Y);
            \draw (Z_1) -- (X_2);
            \draw (Z_1) -- (X_1);
            \draw (X_1) -- (D);
        \end{tikzpicture}
        \par\vspace{0.5em}
        (a) Original DAG
    \end{minipage}
    \hfill
    \begin{minipage}{0.50\textwidth}
        \centering
        \begin{tikzpicture}[->, >=stealth, thick, node distance=1.0cm]
            \node[draw, circle, minimum size=0.85cm] (X_2) {\(X_2\)};
            \node[draw, circle, minimum size=0.50cm, below right=of Y] (M) {\(M(d)\)};
            \node[draw, circle, minimum size=0.85cm, above left=of X_2] (Z_2) {\(Z_2\)};
            \node[draw, circle, minimum size=0.85cm, right=of Z_2] (X_3) {\(X_3\)};
            \node[draw, circle, minimum size=0.85cm, right=of X_3] (Y) {\(Y(d)\)};
            \node[draw, circle, minimum size=0.85cm, below left=of X_2] (Z_1) {\(Z_1\)};
            \node[draw, circle, minimum size=0.85cm, right=of Z_1] (X_1) {\(X_1\)};
            \node[draw, circle, minimum size=0.85cm, right=of X_1] (D) {\(D\)};
            \node[draw, circle, minimum size=0.85cm, above=of M] (d) {\(d\)};
            \draw (Z_2) -- (X_3);
            \draw (X_3) -- (Y);
            \draw (Z_2) -- (X_2);
            \draw (X_2) -- (Y);
            \draw (X_2) -- (D);
            \draw (Z_1) -- (X_2);
            \draw (Z_1) -- (X_1);
            \draw (X_1) -- (D);
            \draw (d) -- (M);
            \draw (M) -- (Y);
        \end{tikzpicture}
        \par\vspace{0.5em}
        (b) SWIG after \(\text{fix}(D = d)\)
    \end{minipage}
\end{figure}

In this example, the goal is to estimate the causal effect of \(D\) on \(Y\), i.e., the mapping \(d \mapsto Y(d)\). The set of variables
\[
S = \{ \{X_1, X_2\}, \; \{X_2, X_3\}, \; \{X_2, Z_2\}, \; \{X_2, Z_1\} \}
\]
are valid adjustment sets that, when conditioned upon, block all backdoor paths between \(D\) and \(Y(d)\) in the SWIG. Notably, conditioning on \(X_2\) alone is insufficient, as it opens a path where \(X_2\) acts as a collider; adding \(X_1\), \(X_3\), \(Z_1\), or \(Z_2\) blocks this path, ensuring the required conditional ignorability.

\bigskip

An important and perhaps underappreciated advantage of the counterfactual DAG (or SWIG) approach is its ability to clarify not only which adjustment sets are valid, but also which are actually necessary and helpful for identifying causal effects. This is particularly relevant when considering the efficiency of estimators and the potential pitfalls of including superfluous control variables. Consider the simple causal DAG:
\[
Z \leftarrow D \rightarrow Y,
\]
where \(D\) is the treatment, \(Y\) is the outcome, and \(Z\) is a variable influenced by \(D\). In this structure, there are no backdoor paths from \(D\) to \(Y\) that could introduce confounding. The only paths from \(D\) to \(Y\) are direct, and all paths from \(D\) to \(Z\) are also direct. When we construct the corresponding counterfactual DAG or SWIG, representing the intervention \(\text{fix}(D = d)\), the graph becomes:
\[
Z(d) \leftarrow d \rightarrow Y(d).
\]
Here, both \(Z(d)\) and \(Y(d)\) are potential outcomes under the intervention \(D = d\), and \(d\) is a fixed value. In this counterfactual graph, there are no open paths from \(d\) to either \(Z(d)\) or \(Y(d)\) except the direct arrows, and crucially, there is no path from \(Z(d)\) to \(Y(d)\) that could induce spurious association. The key insight is that, under this counterfactual representation, no adjustment is required to identify the causal effect of \(D\) on \(Y\). Formally, the empty set is a valid adjustment set:
\[
Y(d) \perp\!\!\!\perp D.
\]
This means that the average causal effect of \(D\) on \(Y\) can be identified without controlling for any other variables—adjustment is unnecessary because there is no confounding to remove. However, it is also true that \(Z\) is a valid control variable in the sense that adjusting for \(Z\) does not introduce bias. This can be seen by considering a "cross-world" DAG (below) that combines both factual and counterfactual variables from the respective structural equation models.

\[
    \begin{tikzpicture}[->, >=stealth, thick, node distance=1.0cm]
        \node[](ez) {\(\epsilon_z\)};
        \node[right=of ez] (z) {\(Z\)};
        \node[right=of z] (D) {\(D\)};
        \node[right=of D] (y) {\(Y\)};
        \node[right=of y] (ey) {\(\epsilon_y\)};
        \node[below=of D] (d) {\(d\)};
        \node[left=of d] (zd) {\(Z(d)\)};
        \node[right=of d] (yd) {\(Y(d)\)};
        \draw (D) -- (z);
        \draw (D) -- (y);
        \draw (ez) -- (z);
        \draw (ez) -- (zd);
        \draw (ey) -- (y);
        \draw (ey) -- (yd);
        \draw (d) -- (zd);
        \draw (d) -- (yd);
    \end{tikzpicture}
    \] In such a graph, \(Y(d)\) is d-separated from \(D\) by \(Z\), so the conditional independence
\[
Y(d) \perp\!\!\!\perp D \mid Z
\]
holds as well. Nevertheless, while \(Z\) is a valid control in the sense that it blocks no necessary paths and does not introduce bias, it is also superfluous: adjusting for \(Z\) does not improve identification and can in fact reduce the precision of our estimates. Including unnecessary variables in the adjustment set can lead to less efficient estimators, as it increases variance without reducing bias. This underscores the practical value of the counterfactual DAG approach: it not only identifies all valid adjustment sets but also helps to avoid overadjustment by highlighting when adjustment is unnecessary.

\section{The Backdoor Criterion}

A central question in causal inference is: under what conditions can we identify the causal effect of a treatment variable on an outcome using observational data? The backdoor criterion, formulated by Judea Pearl, provides a clear graphical answer to this question. This section presents the theorem, illustrates it with figures, and walks through a canonical example.

\bigskip

As discussed before, a \textbf{backdoor path} is any path from the treatment \(D\) to the outcome \(Y\) that starts with an arrow into \(D\), indicating a potential confounding influence. The \textit{backdoor criterion} provides a graphical condition for identifying a set of variables (an adjustment set) that, when conditioned upon, ensures the identification of the causal effect of \(D\) on \(Y\).

\bigskip

\begin{theorem}
\textbf{(Backdoor Criterion)} Let \(G\) be a DAG representing an acyclic structural equation model (ASEM). Relabel a treatment node \(X_j\) as \(D\), and let \(Y\) be any descendant of \(D\). A set of variables \(S\) is a \textbf{valid adjustment set}—meaning it implies conditional ignorability,
\[
Y(d) \perp\!\!\!\perp D \mid S,
\]
if the following two conditions hold:
\begin{enumerate}
    \item No element of \(S\) is a descendant of \(D\).
    \item All backdoor paths from \(D\) to \(Y\) are blocked by \(S\).
\end{enumerate}
\end{theorem}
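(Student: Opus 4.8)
The plan is to derive the backdoor criterion from the SWIG d-separation result established earlier: if $S$ is a set of nodes common to $G$ and the SWIG $\widetilde{G}(d)$ obtained by $\text{fix}(D=d)$, and if $S$ d-separates $Y(d)$ from $D$ in $\widetilde{G}(d)$, then $Y(d) \perp\!\!\!\perp D \mid S$. Hence it suffices to show that conditions (1) and (2) force $S$ to be such a separating set in the SWIG. First I would note that $S$ is admissible for that result: by (1) every element of $S$ is a non-descendant of $D$, and the node-splitting operation leaves non-descendants of $D$ and the edges among them unchanged, so $S$ is indeed a common subset of the vertices of $G$ and of $\widetilde{G}(d)$. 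I would also record that $Y$, being a descendant of $D$, is relabeled $Y(d)$ in $\widetilde{G}(d)$, while $D$ becomes $D^*$, so the two endpoints of interest line up.

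The heart of the argument is a path correspondence. In $\widetilde{G}(d)$ the node $D^*$ retains only its incoming edges, so every path out of $D^*$ starts with an arrow into $D^*$ — a ``backdoor'' path in the SWIG. Stripping the counterfactual superscripts sends such a path to a path of $G$ that still begins with an arrow into $D$ and hence (a path meets $D$ only once, so never traverses an outgoing edge of $D$) is a backdoor path from $D$ to $Y$ in $G$; conversely, a backdoor path from $D$ to $Y$ in $G$ lifts back by attaching $(d)$ to the descendants of $D$. I would then check that this correspondence preserves blocking: edge orientations are untouched, so colliders correspond to colliders and non-colliders to non-colliders, and membership in $S$ is preserved because the only vertices whose names change are descendants of $D$, which by condition (1) lie outside $S$ regardless of the superscript — and similarly their descendants, again descendants of $D$, which is all that matters at a collider. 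This is exactly where condition (1) is used.

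One residual case: a path in $\widetilde{G}(d)$ from $D^*$ to $Y(d)$ might pass through the fixed node $d$; since $d$ has no incoming edges it can occur only as a non-collider there, and as a fixed (constant) node it is treated as permanently conditioned upon, so any such path is automatically blocked and has no bearing. Combining: condition (2) says every backdoor path from $D$ to $Y$ in $G$ is blocked by $S$; via the correspondence every path from $D^*$ to $Y(d)$ in $\widetilde{G}(d)$ that avoids $d$ is blocked by $S$, while those through $d$ are blocked too, so $S$ d-separates $Y(d)$ from $D$ in $\widetilde{G}(d)$. The cited SWIG result then yields $Y(d) \perp\!\!\!\perp D \mid S$, i.e.\ conditional ignorability, as claimed.

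The main obstacle I anticipate is the careful bookkeeping for the path correspondence — verifying that the split-node construction and the counterfactual relabeling neither create nor destroy paths and neither open nor close them — together with pinpointing condition (1) as precisely the hypothesis keeping $S$-membership invariant under relabeling. The treatment of the fixed node $d$ is a small but easy-to-miss ingredient that I would state explicitly rather than gloss over.
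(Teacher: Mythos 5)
Your proposal is correct, but note that it supplies a proof where the paper gives essentially none: after stating the theorem, the paper proceeds directly to the worked example with \(Z_1, Z_2, X_1, X_2, X_3\), and the only general machinery it offers is the SWIG d-separation result of the preceding section, which is itself asserted without proof. Your argument is exactly the missing bridge between the two, and it is the standard Richardson--Robins route: condition (1) guarantees that \(S\) survives the node-splitting unchanged, so it qualifies as a ``common'' separating set for the SWIG result; the fact that the random half of the split node retains only incoming edges turns every SWIG path out of \(D\) into a backdoor path of \(G\) once superscripts are stripped; and condition (1) is used a second time to make blockage transfer, both at non-colliders (membership in \(S\) is invariant under the relabelling) and at colliders (the descendant sets of a collider in \(G\) and in the SWIG can differ only by descendants of \(D\), which \(S\) excludes). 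Your explicit handling of the fixed node \(d\) as a permanently conditioned fork is the right convention and is indeed easy to miss. The one step I would tighten is the collider bookkeeping: state explicitly that a directed path from a collider \(m\) to some \(s \in S\) cannot pass through \(D\) (otherwise \(s\) would be a descendant of \(D\), contradicting condition (1)), hence uses no outgoing edge of \(D\) and survives intact in the SWIG, so the intersection of \(S\) with the descendants of \(m\) is literally the same set in both graphs. With that sentence added, the derivation is complete; by comparison, the paper's ``functional'' and ``d-separation'' arguments in the 401(k) section only instantiate this reasoning for a single example rather than proving the criterion in general.
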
 This criterion ensures that, by conditioning on \(S\), all spurious (non-causal) associations between \(D\) and \(Y\) are removed, leaving only the causal effect to be estimated. To visualize the backdoor criterion, consider the following DAG, adapted from Pearl’s classic example:

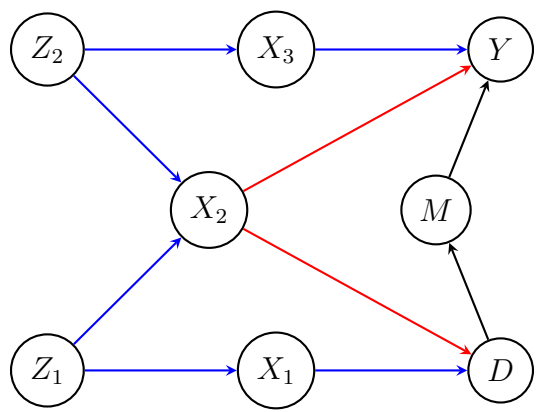
\begin{figure}[htbp]
    \centering
    \begin{tikzpicture}[->, >=stealth, thick, node distance=2cm]
        \node[draw, circle, minimum size=0.85cm] (X_2) {\(X_2\)};
        \node[draw, circle, minimum size=0.85cm, right=of X_2] (M) {\(M\)};
        \node[draw, circle, minimum size=0.85cm, above left=of X_2] (Z_2) {\(Z_2\)};
        \node[draw, circle, minimum size=0.85cm, right=of Z_2] (X_3) {\(X_3\)};
        \node[draw, circle, minimum size=0.85cm, right=of X_3] (Y) {\(Y\)};
        \node[draw, circle, minimum size=0.85cm, below left=of X_2] (Z_1) {\(Z_1\)};
        \node[draw, circle, minimum size=0.85cm, right=of Z_1] (X_1) {\(X_1\)};
        \node[draw, circle, minimum size=0.85cm, right=of X_1] (D) {\(D\)};
        \draw[blue] (Z_2) -- (X_3);
        \draw[blue] (X_3) -- (Y);
        \draw[blue] (Z_2) -- (X_2);
        \draw[red] (X_2) -- (Y);
        \draw[red] (X_2) -- (D);
        \draw (D) -- (M);
        \draw (M) -- (Y);
        \draw[blue] (Z_1) -- (X_2);
        \draw[blue] (Z_1) -- (X_1);
        \draw[blue] (X_1) -- (D);
    \end{tikzpicture}
    \caption{DAG illustrating backdoor paths from \(D\) to \(Y\). Red: direct/inner backdoor path. Blue: longer/outer backdoor path.}
\end{figure} 
\[
\text{(i) }D \leftarrow X_2 \rightarrow Y
\]
\[
\text{(ii) }{D \leftarrow X_1 \leftarrow Z_1 \rightarrow X_2 \leftarrow Z_2 \rightarrow X_3 \rightarrow Y}
\] 

\bigskip

In this DAG, there are two notable backdoor paths from \(D\) to \(Y\) as shown above. To apply the backdoor criterion, we must block all such paths by conditioning on an appropriate set \(S\). Conditioning on \(X_2\) alone blocks the inner backdoor path (i), since \(X_2\) is a non-collider on this path and conditioning on it blocks the flow of association. However, conditioning on \(X_2\) alone actually opens the outer backdoor path (ii), because \(X_2\) acts as a collider along that path. In general, conditioning on a collider (or its descendant) opens a path that would otherwise be blocked, potentially introducing bias. Therefore, to block the outer backdoor path as well, we must also condition on an additional variable that lies on that path but is not a descendant of \(D\)—for example, \(X_1\), \(X_3\), \(Z_1\), or \(Z_2\). Thus, valid adjustment sets include:
\[
S_1 = \{X_1, X_2\} \quad \text{or} \quad S_2 = \{X_2, X_3\} \quad \text{or} \quad S_3 = \{X_2, Z_1\} \quad \text{or} \quad S_4 = \{X_2, Z_2\}
\]
Each of these sets blocks all backdoor paths from \(D\) to \(Y\) and contains no descendants of \(D\).

\bigskip

It is important to note that conditioning on a descendant of \(D\), such as \(M\) (an intermediate outcome on the causal path from \(D\) to \(Y\)), is not valid. Doing so can introduce bias by blocking part of the direct causal effect or opening new spurious paths. Additionally, The backdoor criterion systematically yields minimal adjustment sets for identification. However, it may not capture every valid set. For example, consider the simple DAG:
    \[
    Z \leftarrow D \rightarrow Y.
    \]
Here, conditioning on \(Z\) does not satisfy the backdoor criterion (since \(Z\) is a descendant of \(D\)), yet \(Z\) is a valid control. In this case, \(D\) directly causes \(Y\) without confounding, so there is no need to adjust for \(Z\). Adjusting for \(Z\) may even lower the precision of the estimated effect. This limitation is useful as it helps disregard controls that, while valid, do not add any meaningful information for identifying the causal effect. The same observation applies within the counterfactual approach.
\section{Notebook}
The Jupyter notebook is available here in \href{https://github.com/Gauranga2022/CMI-MSc-Data-Science/tree/main/Sem4/CausalInference%20(GP)}{Github.}
\section{References}
\begin{enumerate}
  \item Victor Chernozhukov et al.\ (2023).  
        \emph{Applied Causal Inference Powered by Machine Learning and AI},
        Chapter 2.  
        \url{https://causalml-book.org/}
\end{enumerate}

\end{document}